\newcommand{\Z}{{\mathbb Z}}
\newtheorem{theorem}{Theorem}[section]
\title{
Automorphism in Gauge Theories: Higher Symmetries and Transversal Non-Clifford Logical Gates
}
\author{Po-Shen Hsin$^{1}$, Ryohei Kobayashi$^{2}$}
\date{\today}
\begin{document}

\maketitle

\begin{center}

${}^1$ Department of Mathematics, King’s College London, Strand, London WC2R 2LS, UK \\
 ${}^2$School of Natural Sciences, Institute for Advanced Study, Princeton, NJ 08540, USA
 \end{center}

\bigskip\bigskip
\begin{abstract}
Gauge theories are important descriptions for many physical phenomena and systems in quantum computation. Automorphism of gauge group naturally gives global symmetries of gauge theories. In this work we study such symmetries in gauge theories induced by automorphisms of the gauge group,
when the gauge theories have nontrivial topological actions in different spacetime dimensions. We discover the automorphism symmetry can be extended, become a higher group symmetry, and/or become a non-invertible symmetry. We illustrate the discussion with various models in field theory and on the lattice.
In particular, we use automorphism symmetry to construct new transversal non-Clifford logical gates in topological quantum codes. In particular, we show that 2+1d $\mathbb{Z}_N$ qudit Clifford stabilizer models can implement non-Clifford transversal logical gate in the 4th level $\mathbb{Z}_N$ qudit Clifford hierarchy for $N\geq 3$, extending the generalized Bravyi-K\"onig bound proposed in \cite{Kobayashi:2025cfh} for qubits.

\medskip
\noindent
\end{abstract}

\bigskip \bigskip \bigskip
 

\bigskip


\tableofcontents

\unitlength = .8mm

\setcounter{tocdepth}{3}

\bigskip

\section{Introduction}
\label{sec:intro}

Symmetries in finite group gauge theories are important in exploring new gapped phases enriched with symmetries, which have wide applications in quantum computation and dynamics of quantum systems, as well as mathematical structures of higher fusion categories. Gauge theories with gauge group $G$ can have topological action, or twist, described by SPT phases with $G$ symmetry \cite{Chen:2011pg}. Here, we will focus on Dijkgraaf-Witten theories, where the topological action in $D$ spacetime dimension is described by $D$-cocycle for group $G$ \cite{Dijkgraaf:1989pz}. 

There are several important sources of symmetries in finite group gauge theories. One of them is magnetic defects that generate one-form symmetry, and various gauged SPT defects including the Wilson operators. In addition, there are automorphism symmetries given by the automorphism group of the gauge group. More precisely, the faithful symmetry is the outer automorphism, given by the automorphism group quotient by the inner automorphisms, i.e. gauge transformation. Here, we will focus on the entire automorphism group, which is an extension of the outer automorphism group by the inner automorphisms. Once extended, the symmetry can be gauged when there is no topological action---the resulting gauge group is the semidirect product of the original gauge group and the automorphism. The automorphism symmetry, being a 0-form symmetry, can be expressed as condensation defects \cite{Gaiotto:2019xmp} of the Wilson lines and the magnetic operators (see e.g. \cite{Cordova:2024mqg} for a recent discussion).

When the gauge theory has a topological action, the magnetic defect is modified by gauged SPT defects and can become non-invertible \cite{Barkeshli:2022edm}. 
In addition, some of the gauged SPT defects become trivial \cite{Barkeshli:2022edm}. However, the fate of automorphism symmetry in the presence of topological action has not being systematically investigated.
This work serves to fill in the gap for automorphism symmetries. In a separate work we will extend the discussion to the quotient outer automorphisms. We will show that in the presence of topological action for the gauge group, the 0-form symmetry from the automorphism of gauge group can be modified in the following ways:
\begin{itemize}
    \item[1. ] The group automorphism remains an invertible 0-form symmetry, but the fusion rules are modified: the 0-form symmetry is extended by another 0-form symmetry. See section \ref{sec:Z24in2+1d} for an example.
    
    \item[2. ] The group automorphism remains an invertible 0-form symmetry of the theory, the fusion rules are not modified, but the 0-form symmetry participates in nontrivial higher group symmetries \cite{Cordova:2018cvg,Benini:2018reh}: the 0-form symmetry is extended by higher-form symmetries. See section \ref{subsub:maxwell} and section \ref{sec:WWhighergroup} for examples.
    
    \item[3. ] The group automorphism becomes a non-invertible 0-form symmetry.\footnote{
    See e.g. \cite{Shao:2023gho,Schafer-Nameki:2023jdn} for an introduction to non-invertible symmetries.
    } See section \ref{subsub:maxwell} and section \ref{sec:WWhighergroup} for examples.
\end{itemize}

An important application of finite group gauge theories is topological quantum codes as pioneered by Kitaev \cite{Kitaev:1997wr}. Symmetries in finite group gauge theories serve as logical gates in topological quantum codes, as recently studied in \cite{Yoshida_gate_SPT_2015, 
 Yoshida_global_symmetry_2016, Yoshida2017387, Barkeshli:2022edm,Barkeshli:2023bta,Kobayashi2024crosscap,Hsin2024_non-Abelian,barkeshli2023codimension,Hsin:2024nwc,kobayashi2023fermionic, zhu2025topological, zhu2025transversal}. 
 Invertible symmetries in quantum codes can be finite depth circuits that give rise to transversal logical gates.
 We will use automorphism symmetries to construct new transversal non-Clifford logical gates. Examples of such transversal gates from automorphism symmetries have been studied in our previous works \cite{Hsin2024_non-Abelian,Kobayashi:2025cfh}. In particular, in \cite{Kobayashi:2025cfh} we discuss the construction of transversal $R_D$ logical gates in $D$ spacetime dimensions using our method. We also conjectured in \cite{Kobayashi:2025cfh} that the transversal logical gates in Clifford hierarchy $h$ can be realized in spatial dimension $(h-1)$ for qubit systems. In this work, we will show that $\mathbb{Z}_N$ qudit allows us to further realize transversal logical gates of higher level in qudit Clifford hierarchy.

The work is organized as follows. In section~\ref{sec:automorphismgate}, we discuss automorphism symmetry in twisted gauge theories.
We show that automorphism symmetry can (1) become extended as illustrated by $\mathbb{Z}_2^4$ gauge theory in section \ref{sec:Z24in2+1d}, (2) become a higher group as illustrated by $\mathbb{Z}_2^2$ 2-form gauge theory in section \ref{sec:highergroupexample} and Maxwell theory in section \ref{subsub:maxwell}, and (3) become a non-invertible symmetry as illustrated by $\mathbb{Z}_2^3$ gauge theory in section \ref{subsec:lattice_sandwich} and Maxwell theory in section \ref{subsub:maxwell}.
In section \ref{sec:higher group gauged SPT}, we discuss its interplay with gauged SPT symmetries in untwisted gauge theories such as $\mathbb{Z}_N\times\mathbb{Z}_M$ gauge theory in section \ref{subsubsec:ZNZM in generic dim}.
In section \ref{sec:twistedZNm}, we discuss the automorphism symmetry in $\mathbb{Z}_N^m$ gauge theories. 
In section \ref{sec:noncliffordgate}, we discuss the application of automorphism symmetry for constructing transversal non-Clifford logical gates. For more applications in logical gates, see our companion paper \cite{Kobayashi:2025cfh}. In section \ref{sec:discussion}, we discuss several future directions. In section \ref{sec:discussion}, we discuss a few future directions.

\section{Automorphism in Twisted Gauge Theories and Higher Symmetry}
\label{sec:automorphismgate}

\begin{figure}[t]
    \centering
    \includegraphics[width=0.4\linewidth]{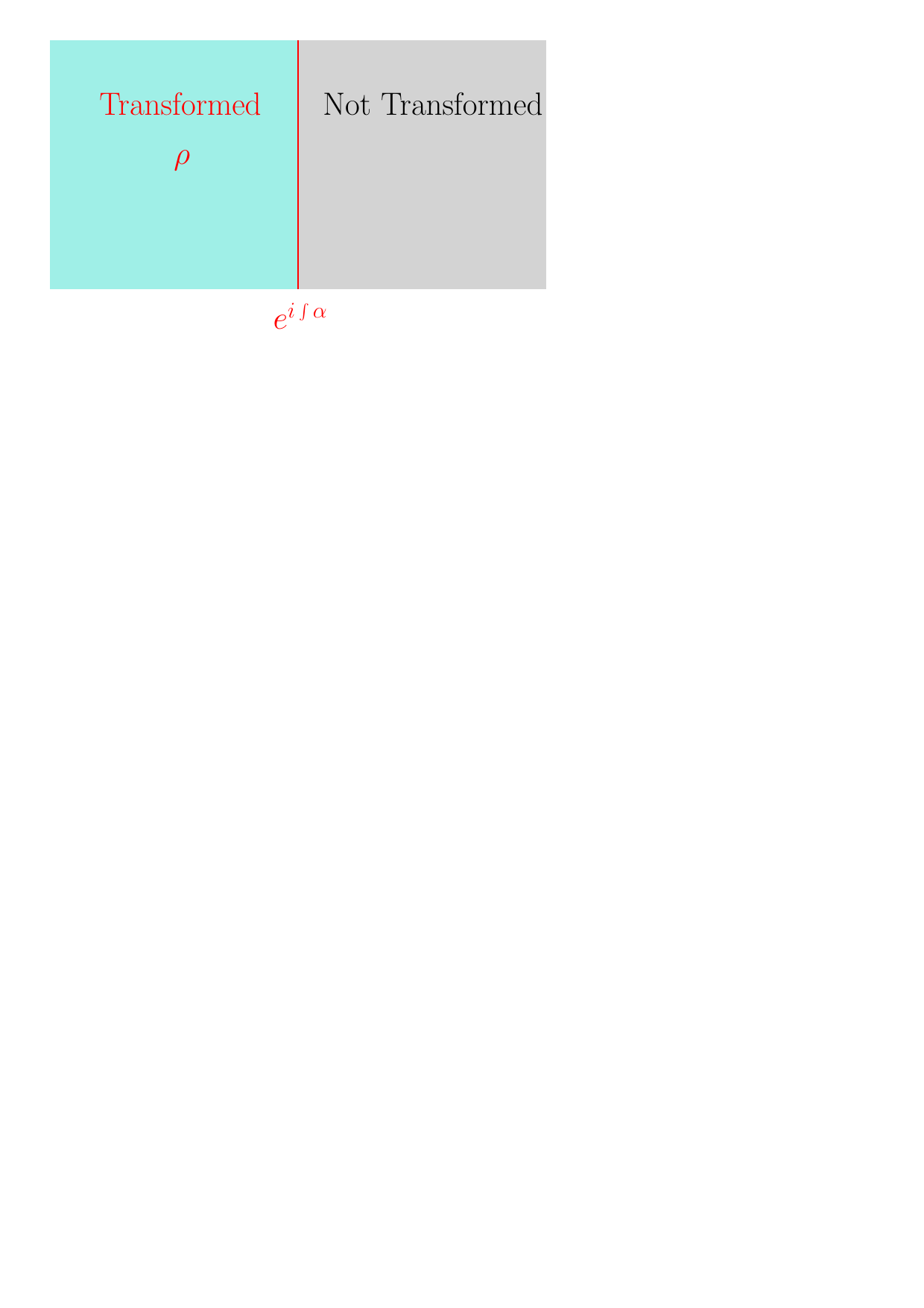}
    \caption{Automorphism symmetry $\rho:G\rightarrow G$ in twisted gauge theory with topological action $\omega$ needs to decorate with gauged SPT symmetry $e^{i\int \alpha}$ for automorphism $\rho$ that preserves the cohomology class $[\omega]$ but in general not the cocycle, $\rho^*\omega=\omega+d\alpha$. The time direction goes from the right to the left, and the interface is oriented.}
    \label{fig:automorphism}
\end{figure}

Consider $G$ gauge theory with automorphism $\rho:G\rightarrow G$. The theory has topological action given by cocycle $\omega$ with $[\omega]\in H^D(G,U(1))$ in spacetime dimension $D$.
For trivial topological action $\omega=0$, the automorphism is an invertible symmetry of the theory. For nontrivial topological action $\omega$, there can be two situations:
\begin{itemize}
    \item[1.] The automorphism is an invertible symmetry if it preserves the cohomology class 
\begin{equation}
    [\rho^*\omega]=[\omega]~.
\end{equation}
The cocycles $\omega,\rho^*\omega$ can differ by an exact cocycle:
\begin{equation}
    \rho^*\omega=\omega+d\alpha_\rho~,
\end{equation}
where $\alpha_\rho$ is a $U(1)$ valued group cochain. When we perform the automorphism transformation on half spacetime as in Fig.~\ref{fig:automorphism}, the action is shifted by $\int d\alpha$, and via Stokes' theorem it gives rise to additional gauged SPT defect $e^{i\int \alpha}$ on the domain wall that generates the automorphism symmetry. In other words, the automorphism symmetry is
\begin{equation}
    U_\rho= e^{i\int \alpha_\rho} V_\rho~,
\end{equation}
where $V_\rho$ implements the automorphism action.

We emphasize that the gauged SPT defect $e^{i\int\alpha}$ is not topological on its own; otherwise, it can be removed simply by attaching with another topological gauged SPT defect, since gauged SPT defects generate invertible symmetries. Likewise, a pure automorphism action $V_\rho$ is not a symmetry of the theory. Only when combined together they give the well-defined automorphism symmetry domain wall.

We also note that the extra SPT factor $e^{i\int \alpha}$ in the automorphism symmetry is reminiscent of the magnetic defects in twisted gauge theories acquiring SPT factor \cite{Barkeshli:2022edm}: both arise because of the twist in the theory.

\item[2.] When no such $\alpha$ exists, the domain wall that generates the automorphism symmetry needs to decorate with additional TQFT to cancel the anomaly associated with the SPT class $[\rho^*\omega-\omega]$.
Such TQFT makes the automorphism symmetry non-invertible, similar to the construction in \cite{Barkeshli:2022edm,Hsin:2022heo}. For an introduction to non-invertible symmetries in general quantum systems, see e.g. \cite{Schafer-Nameki:2023jdn,Shao:2023gho}. 
\end{itemize}

Let us begin with the first case when the automorphism symmetry is invertible. Later in section \ref{sec:non-invertible} we will present a sandwich construction for general automorphism non-invertible symmetry.

\paragraph{Example: swap automorphism symmetry with additional gauged SPT defect}

To illustrate the additional gauged SPT defect on the automorphism symmetry, let's consider $\mathbb{Z}_N\times\mathbb{Z}_N$ gauge theory in 2+1d with the mixed Chern-Simons topological action
\begin{equation}\label{eqn:ZNxZNCS}
    \omega=\frac{k}{2\pi}a_1 da_2~,
\end{equation}
where $a_1,a_2$ are gauge fields for the two $\mathbb{Z}_N$s.
Consider the swap automorphism that exchanges the two $\mathbb{Z}_N$s:
\begin{equation}
\rho:\quad (g_1,g_2)\rightarrow (g_2,g_1)\in\mathbb{Z}_N\times\mathbb{Z}_N~.
\end{equation}
The automorphism symmetry acts on the gauge fields as $a_1\leftrightarrow a_2$. Under the automorphism symmetry, the mixed Chern-Simons action transforms as
\begin{equation}
    \rho^*\omega= \frac{k}{2\pi}a_2 da_1=\frac{k}{2\pi}a_1da_2+d\alpha,\quad \alpha=\frac{k}{2\pi}a_1 a_2~.
\end{equation}
Thus the automorphism symmetry is attached with the gauged SPT defect as
\begin{equation}
    U_\rho=e^{\frac{ik}{2\pi}\int a_1 a_2}V_\rho~.
\end{equation}

\subsection{Automorphism symmetry can get extended by gauged SPT symmetries}
\label{subsec:automorphism extended by SPT}
\begin{figure}
    \centering
    \includegraphics[width=0.3\linewidth]{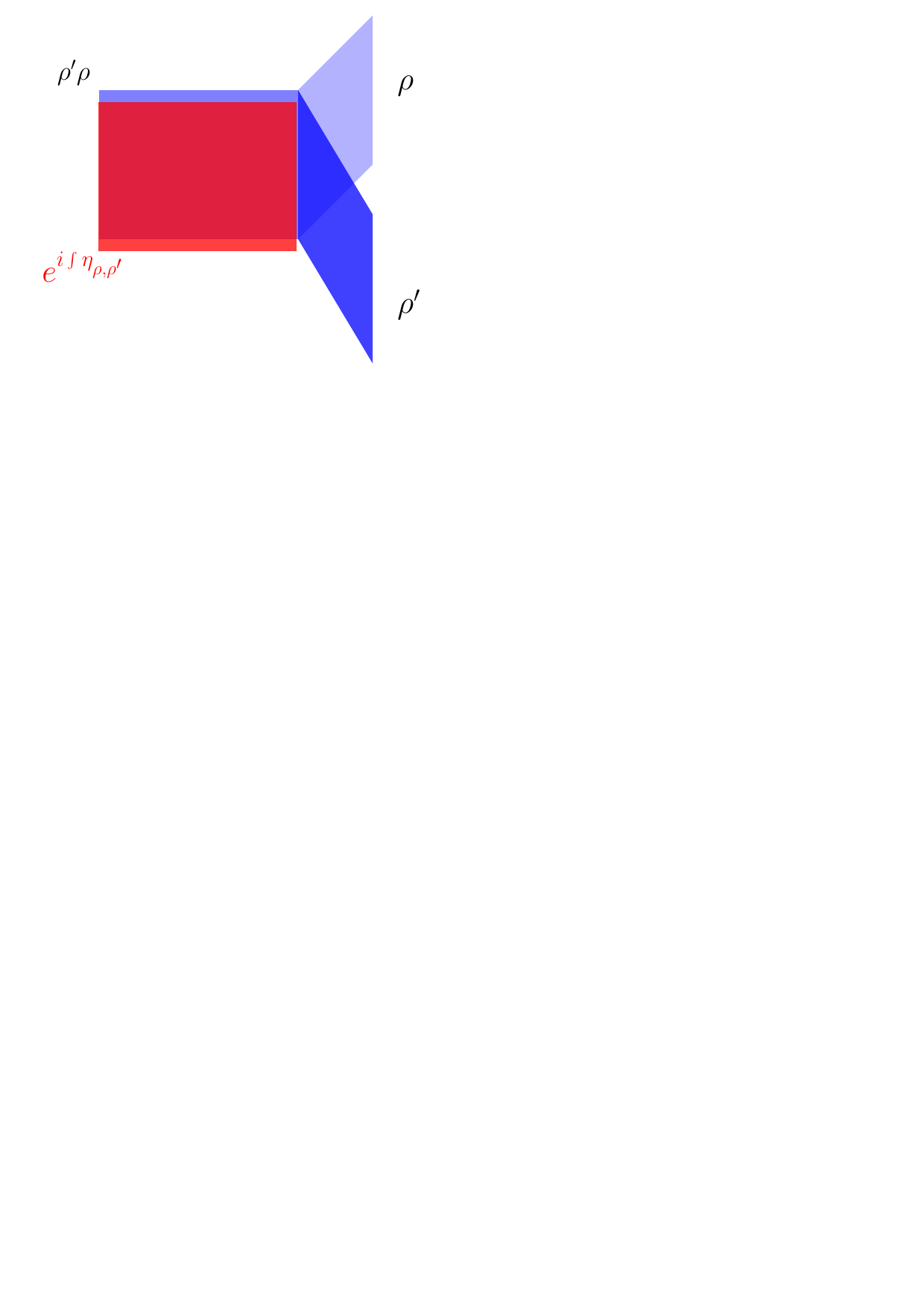}
    \caption{Automorphism symmetry can be extended by the gauged SPT symmetry $e^{i\int \eta}$ due to the decoration by gauged SPT defect $U_\rho=e^{i\int \alpha_\rho}V_\rho$.}
    \label{fig:extension}
\end{figure}

The modification on the automorphism symmetry can change its fusion rule. When the automorphism symmetries $\rho,\rho'$ fuse together, this gives
\begin{align}
    U_{\rho'}(M)\cdot U_\rho(M)&=e^{i\int_M \left(\alpha_{\rho'}+\rho'(\alpha_{\rho})\right)}  U_{\rho'}^0(M)\cdot U^0_{\rho}(M)\cr 
    &=
    e^{i\int_M \left(\alpha_{\rho'}+\rho'(\alpha_\rho)\right)}U_{\rho'\rho}^0(M)\cr 
    &=e^{i\int_M\left(\alpha_{\rho'}+\rho'(\alpha_\rho)-\alpha_{\rho'\rho}\right) }U_{\rho'\rho}(M)~,
\end{align}
where $\alpha_\rho$ is the group cochain associated with automorphism $\rho$, and $\rho'(\alpha_\rho)$ is its pullback under the automorphism action $\rho'$. We used $U^0_{\rho'}\cdot U^0_\rho=U^0_{\rho'\rho}$.
Thus the automorphism symmetry composes projectively with the projective factor (see Fig.~\ref{fig:extension})
\begin{equation}
    e^{i\eta_{\rho,\rho'}}:=e^{i\left(\alpha_{\rho'}+\rho'(\alpha_\rho)-\alpha_{\rho'\rho}\right)}~.
\end{equation}
This projective factor itself is an operator: it is a gauged SPT operator supported on $(D-1)$-dimensional submanifold in $D$ spacetime dimension. Indeed $\eta_{\rho,\rho'}$ is closed,
\begin{align}
    d\eta_{\rho,\rho'} = (\rho'^*\omega-\omega) + \rho'^*(\rho^*\omega-\omega ) - ((\rho'\rho)^*\omega-\omega) = 0~,
\end{align}
therefore the operator $e^{i\int\eta_{\rho,\rho'}}$ is topological.
In other words, the junction of 0-form automorphism symmetries emits additional 0-form gauged SPT symmetry operator, and the automorphism 0-form symmetry is extended by this 0-form gauged SPT symmetry $e^{i\int \eta}$. 

\subsubsection{Example: $\mathbb{Z}_2^4$ gauge theory in 2+1d }
\label{sec:Z24in2+1d}

Consider $\mathbb{Z}_2^4$ gauge theory with the action
\begin{equation}
\omega=\frac{\pi}{2}   \left(a_1 \cup da_4 -da_3 \cup a_2\right)~,
\end{equation}
where $a_i$ are the $\mathbb{Z}_2$ gauge fields for the four $\mathbb{Z}_2$s in the gauge group.
The following automorphism of the gauge group has order 2:
\begin{equation}
    \rho: (g_1,g_2,g_3,g_4)\rightarrow  (g_1+g_3,g_2+g_4,g_3, g_4)\ \mod 2, \quad  \rho^2=1~.
\end{equation}
The automorphism transforms the gauge fields as $a_3\rightarrow a_3+a_1,a_4\rightarrow a_4+a_2$.
The action transforms under the automorphism as
\begin{equation}
    \omega\rightarrow \omega+d\alpha,\quad \alpha=\frac{\pi}{2}a_1\cup a_2~.
\end{equation}
Thus the automorphism symmetry is
\begin{equation}
    U_\rho=i^{\int a_1\cup a_2}V_\rho~.
\end{equation}
The automorphism symmetry becomes a $\mathbb{Z}_4$ symmetry: the square of the symmetry gives a nontrivial order 2 operator:
\begin{equation}
    U_\rho(M)^2=(-1)^{\int_M a_1\cup a_2}~,
\end{equation}
which is a nontrivial gauged SPT operator that pumps the cluster state for the first two $\mathbb{Z}_2$s in $\mathbb{Z}_2^3$ on the 1+1d support $M$ of the automorphism symmetry operator. 
In other words, the $\mathbb{Z}_2$ automorphism $\rho$ of the gauge group is extended by the gauged SPT symmetry to become a $\mathbb{Z}_4$ symmetry.

\subsubsection{Example of automorphism symmetry not extended}

Whether the automorphism symmetry is extended depends on the cocycle and the particular automorphism. For example, consider another cocycle of $\mathbb{Z}_2^3$ gauge group in 2+1d:
\begin{equation}\label{eqn:a1a2a3}
    \omega=\pi a_1\cup a_2\cup a_3~,
\end{equation}
and the $\mathbb{Z}_2$ automorphism
\begin{equation}
    \rho:(g_1,g_2,g_3)\rightarrow (g_1+g_2,g_2,g_3+g_2)\ \mod 2.
\end{equation}
We remark that the theory and the automorphism symmetry has been considered in \cite{Kobayashi:2025cfh} to construct transversal T logical gate of $\Z_2^3$ gauge theory in 2+1d. The automorphism transforms the gauge fields as $a_2\rightarrow a_2+a_1+a_3$.

As discussed in \cite{Kobayashi:2025cfh}, the action transforms as
\begin{equation}
    \omega\rightarrow\omega+d\alpha,\quad \alpha=\frac{\pi}{2}a_1\cup a_3~.
\end{equation}
Thus the automorphism symmetry is
\begin{equation}
    U_\rho(M)=i^{\int_M a_1\cup a_3}V_\rho~.
\end{equation}
The automorphism squares to 1:
\begin{equation}
    U_\rho(M)^2=(-1)^{\int_M a_1\cup a_3}=1~,
\end{equation}
which is a trivial domain wall operator, since the twist (\ref{eqn:a1a2a3}) implies that on closed $M$, $\int_M a_1\cup a_3=0$.

\subsection{Automorphism symmetry can form higher group}
\label{sec:highergroupexample}
Due to the decoration with gauged SPT defect $e^{i\int \alpha}$, the automorphism symmetry can also become an extension that involves higher-form symmetry---the combined symmetry becomes a higher group. This happens when the junction of automorphism symmetries produce a gauged SPT symmetry supported on lower-dimensional submanifolds. We will give an example to illustrate such situation.

Consider $\mathbb{Z}_2\times\mathbb{Z}_2$ 2-form gauge theory in 4+1d, with the topological action
\begin{equation}
    \omega=\frac{\pi}{2}a_1\cup da_2~,
\end{equation}
where $a_1,a_2$ are the $\mathbb{Z}_2$ 2-form gauge fields. 
In the following, we rewrite the theory using the continuous notation by embedding the gauge fields inside continuous $U(1)$ 2-form gauge fields: $a_i\rightarrow a_i/\pi$ with $\oint a_i=0,\pi$ mod $2\pi$. The constraint on the holonomy can be enforced by Lagrangian multipliers:
\begin{equation}
    \frac{1}{2\pi}a_1da_2 +\frac{2}{2\pi}a_1d\tilde a_1+\frac{2}{2\pi}a_2d\tilde a_2~.
\end{equation}
Integrating out $\tilde a_i$ recovers the action for the $\mathbb{Z}_2\times\mathbb{Z}_2$ 2-form gauge fields.

Let us focus on the automorphism of $\mathbb{Z}_2\times\mathbb{Z}_2$ given by
\begin{equation}
    \rho:\quad (g_1,g_2)\rightarrow (g_1+g_2, g_2) \ \mod 2,\quad 
    \rho^2=1~.
\end{equation}
The automorphism transforms the gauge fields as $a_2\rightarrow a_2+a_1$.
Under the automorphism, the topological action transforms as (we still use the continuous notation)
\begin{equation}
    \omega\rightarrow\omega+\frac{1}{2\pi}a_1da_1=\omega+d\alpha,\quad 
\alpha=\frac{1}{4\pi}a_1a_1~.
\end{equation}
Thus the automorphism symmetry is
\begin{equation}
    U_\rho=e^{\frac{i}{4\pi}\int a_1 a_1}V_\rho~.
\end{equation}
The extra gauged SPT factor $e^{i\int\alpha}$ represents half of the minimal 1-form symmetry SPT phase with $\mathbb{Z}_2$ 1-form symmetry \cite{Kapustin:2014gua,Hsin:2018vcg,Tsui:2019ykk}.

\paragraph{Symmetry extension as 0-form symmetry}

First, let us show that as a 0-form symmetry, the automorphism symmetry $U_\rho$ is actually extended to be $\mathbb{Z}_4$ symmetry for the $\mathbb{Z}_2$ automorphism $\rho$.
Taking the square of the symmetry operator on 4-dimensional submanifold $M$ gives
\begin{equation}
    U_\rho(M)^2=e^{\frac{2i}{4\pi}\int_M a_1 a_1}~,
\end{equation}
which pumps a well-defined $\mathbb{Z}_2$ 1-form symmetry SPT phase on the submanifold $M$ \cite{Kapustin:2014gua,Hsin:2018vcg,Tsui:2019ykk}. The right hand side generates a $\mathbb{Z}_2$ 0-form symmetry, and thus the automorphism symmetry is extended to be $\mathbb{Z}_4$ 0-form symmetry.

\paragraph{Higher group symmetry: combining with 2-form symmetry}

Let us further take the 4th power of the automorphism symmetry $U_\rho$:
\begin{equation}
    U_\rho(M)^4=e^{\frac{4i}{4\pi}\int_M a_1 a_1}=(-1)^{\int_M \frac{a_1}{\pi}\frac{a_1}{\pi}}=(-1)^{\int \frac{a_1}{\pi}\cup (w_2+w_1^2)}=e^{i\int_\gamma a_1}~,
\end{equation}
where $w_i$ are the $i$th Stiefel-Whitney classes of $TM$, and we used the Wu formula on 4-manifold $M$: $x\cup x=x\cup (w_2+w_1^2)$ for $\mathbb{Z}_2$ gauge fields $x=0,1$ \cite{milnor1974characteristic}. The right hand side is equivalent to a membrane operator on $\gamma$ given by the Poincar\'e dual of $(w_2+w_1^2)$ on $M$. Since the membrane operator on the right hand side generates a higher 2-form symmetry, the automorphism 0-form symmetry is mixed with this 2-form symmetry to become a 3-group symmetry.

For example, if we take the submanifold $M=\mathbb{CP}^2$, the fusion of four automorphism 0-form symmetries produce the 2-form symmetry supported on the dual 2-cycle of the K\"ahler form of $\mathbb{CP}^2$. On the other hand, for the submanifold $M=T^4$ or $M=S^2\times S^2$, the membrane operator produced by the fusion is trivial.

We remark that the above discussion is similar to the 3-group symmetry in 4+1d loop toric code 2-form gauge theory discussed in \cite{Chen:2021xuc}.

\subsection{Automorphism non-invertible symmetry from a sandwich construction}
\label{sec:non-invertible}

The automorphism symmetry $\rho$ changes the topological action for $G$ gauge theory by
\begin{equation}
    \rho^*\omega-\omega~,
\end{equation}
which is a cocycle of group $G$. Denote the subgroup $K$ such that the cocycle is exact: under inclusion $\iota:K\rightarrow G$,
\begin{equation}
    \iota\left(\rho^*\omega-\omega\right)=d\alpha~,
\end{equation}
where $\alpha$ is a cochain of subgroup $K$.
Then we can express the automorphism symmetry as the following sandwich construction similar to \cite{Hsin:2024aqb,Hsin:2025ria}: in the middle, it is a domain wall that generates an invertible symmetry in $K$ gauge theory, and on the left and right it is the condensation interface where the $G$ gauge group is broken to $K$ subgroup (see Fig.~\ref{fig:sandwichautomorphism}).

\begin{figure}[t]
    \centering
    \includegraphics[width=0.4\linewidth]{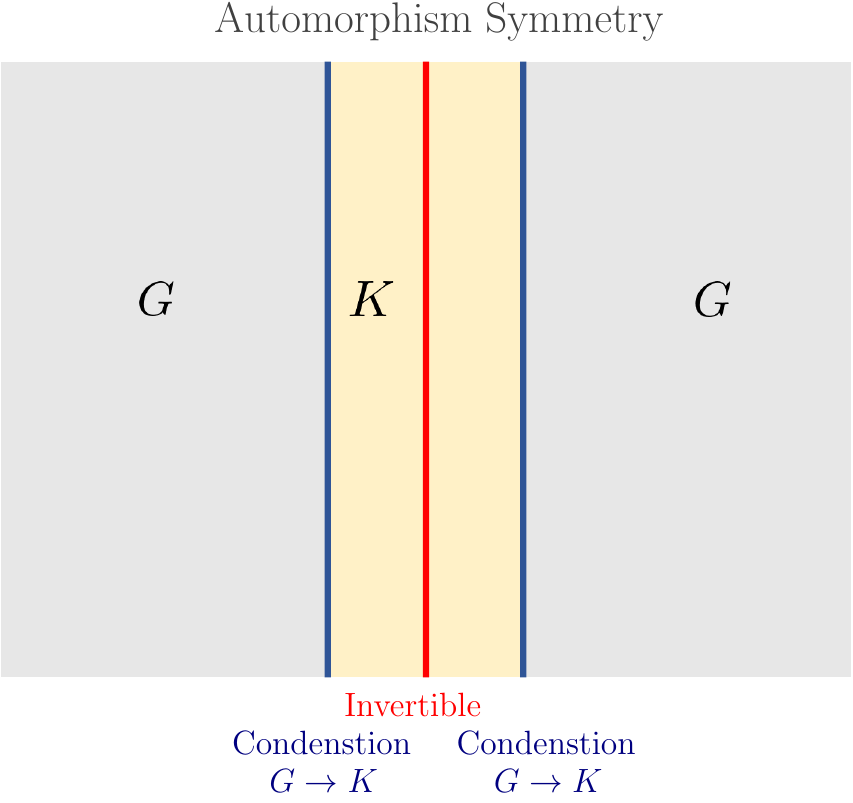}
    \caption{Sandwich construction for automorphism symmetry in twisted gauge theory with gauge group $G$. The two blue interfaces break the gauge group from $G$ to subgroup $K$, such that the automorphism symmetry becomes invertible higher-group symmetry in gauge group $K$. As a whole sandwich, the automorphism symmetry is non-invertible.}
    \label{fig:sandwichautomorphism}
\end{figure}

We can express the automorphism symmetry on the lattice as
\begin{equation}
    \tilde U_{\rho;K}=\mathcal{D}^\dagger \cdot U_{\rho;K}\cdot \mathcal{D}~,
\end{equation}
where $\mathcal{D}$ is a gapped domain wall acting on the $G$ gauge theory by the projection to $K$ gauge theory subspace, which can be  
obtained by condensing the electric charges that break the gauge group $G$ to $K$; ${\cal D}^\dag$ is gauging the $G/K$ coset symmetry as discussed in \cite{Hsin:2024aqb,Hsin:2025ria}.
On the lattice, the operator $\mathcal{D}$ is obtained by the projection of $G$ gauge fields onto the $K$ gauge fields followed by the projection onto the $K$ Gauss law: the projection is regarded as a sequence of the measurements followed by post selection. 
$U_{\rho;K}$ is the invertible symmetry in $K$ gauge theory. 
In general, $U_{\rho;K}'$ is a defect $V_{\rho;K}$ inducing the automorphism $\rho$ decorated with the gauged SPT symmetry $e^{i\int \alpha}$ due to the twist:
\begin{equation}
    \tilde U_{\rho;K}=\mathcal{D}^\dagger\cdot\left(e^{i\int\alpha}V_{\rho;K}  \right)\cdot \mathcal{D}~,
     \label{eq:sandwich}
\end{equation}
We will give explicit examples in Sec.~\ref{subsec:lattice_sandwich}.

To illustrate the construction, let us consider the special case
$G=H\ltimes K$, and the automorphism is the inner automorphism given by the action of $H$.
The domain wall from $K$ to $G$ gauge theory is obtained by gauging the $H$ symmetry. As a dual description, $K$ gauge theory is obtained by condensing $\text{Rep}(H)$ Wilson lines. In that case, the fusion rule of $\mathcal{D}$ is given by
\begin{align}
    \mathcal{D}\times\mathcal{D}^\dagger = \sum_{h\in H} U_{\rho(h);K}~, \quad \mathcal{D}^\dagger\times \mathcal{D} = \mathcal{C}_{\text{Rep}(H)}~,
\end{align}
where $U_{\rho(h);K}$ is an automorphism symmetry of $K$ gauge theory induced by $h\in H$. $\mathcal{C}_{\text{Rep}(H)}$ is a condensation defect formed by the $\text{Rep}(H)$ Wilson lines. Using the above fusion of $\mathcal{D}$, the fusion rule of the non-invertible automorphism symmetries is obtained by
\begin{align}
    \tilde{U}_{\rho_1;K}\times \tilde{U}_{\rho_2;K} = \sum_{h\in H} \mathcal{D}^\dagger(U^{-1}_{\rho(h);K}U_{\rho_1;K} U_{\rho(h);K} U_{\rho_2;K})\mathcal{D}~,
\end{align}
where we also used $\mathcal{D}^\dagger$ absorbs $U_{\rho(h);K}$ defects. Note that the above fusion of automorphism defects $UUUU$ in $K$ gauge theory generally produces a $K$ gauged SPT defect combined with $U_{\rho(h)\rho_1\rho(h)\rho_2}$, as discussed in Sec.~\ref{subsec:automorphism extended by SPT}.

We note that in the above sandwich construction for the automorphism symmetry, since the $G$ gauge theory is twisted with topological action given by the group cocycle $\omega$, the $K$ gauge theory in the middle is also in general twisted with topological action $\iota^*\omega$ where $\iota:K\rightarrow G$ is the inclusion map. In addition, on the condensation domain wall that breaks $G$ to $K$ there is also topological action given by anomaly inflow: the topological action for $\omega$ restricted to the $G$ gauge field $a$ that takes the form of $K$ gauge field transformed by $G$-valued scalar $\phi$, $a= a|_K^{\phi}$, is given by $\iota^*\omega(a|_K)$ together with boundary terms. These boundary terms are topological actions of $\phi,a$ on the domain wall. Mathematically, it is some kind of modified transgression of the group cocycle $\omega$: when $K=1$ this is the transgression of $\omega$ (for a recent discussion of transgression in the study of symmetries, see \cite{Feng:2025yge}).

\subsubsection{Lattice models with non-invertible automorphism symmetries}
\label{subsec:lattice_sandwich}

The lattice model for twisted gauge theories with cocycle $\omega$ can be constructed in two steps: (1) first, use the standard method for constructing fixed-point lattice model for $G$ SPT for group $\omega$ as discussed in \cite{Chen:2011pg}, (2) gauge the $G$ symmetry to obtain twisted $G$ gauge theory with topological action $\omega$.

As an example, we consider a 2+1d model with $G=(\Z_2)^3,$ and
\begin{align}
    \omega = \pi(a_1\cup a_1\cup a_2+a_1\cup a_2\cup a_3 + a_1\cup a_1\cup a_3)~,
\end{align}
with the $\Z_2$ gauge fields $a_1,a_2,a_3$. The twisted $\Z_2^3$ gauge theory is described by a Clifford stabilizer model on a square lattice (or more generally, a triangulation) on a 2d space.

Each edge of a square lattice has three qubits $\{Z_i,X_i\}$ with $i=1,2,3$.
The stabilizer Hamiltonian is given by
\begin{align}
    H= H_{\text{Gauss}} + H_{\text{flux}}~,
\end{align}
with
\begin{align}
\begin{split}
    H_{\text{Gauss}} &= -\sum_v \left( (-1)^{\int \hat v a_1a_2 + a_1\hat{v}a_2+\hat v d\hat v a_2+ \hat v a_2a_3 +\hat v a_1a_3 + a_1\hat{v}a_3+\hat v d\hat v a_3}\prod_{\partial e\supset v} X_{1,e}\right) \\
    &= -\sum_v \left( (-1)^{\int a_1a_1\hat v  + a_1\hat v a_3}\prod_{\partial e\supset v} X_{2,e}\right) \\
    &= -\sum_v \left( (-1)^{\int a_1a_2\hat v + a_1a_1\hat v}\prod_{\partial e\supset v} X_{3,e}\right)~, \\
\end{split}
\end{align}
\begin{align}
    H_{\text{flux}} = -\sum_{f}\left(\prod_{e\in \partial f} Z_{1,e}\right)-\sum_{f}\left(\prod_{e\in \partial f} Z_{2,e}\right)-\sum_{f}\left(\prod_{e\in \partial f} Z_{3,e}\right)~,
\end{align}
where $\hat{v}$ is a 0-cochain which becomes 1 on a single vertex $v$, otherwise 0.
The integral appearing in the Hamiltonian $H_{\text{Gauss}}$ is over the whole space, but due to locality of $\hat{v}$ this is a local Hamiltonian consisting of $CZ, Z$ operators.

Let us describe the non-invertible automorphism symmetry of this model.
We consider the automorphism swapping the $\Z_2$ gauge fields as
\begin{align}
    a_1\leftrightarrow a_2~,
\end{align}
which corresponds to the automorphism acting on group generators by $(g_1,g_2,g_3)\leftrightarrow (g_2,g_1,g_3)$. This automorphism does not leave $\omega$ invariant, therefore this does not generate the symmetry of the twisted $\Z_2^3$ gauge theory. Meanwhile, this automorphism is realized by a non-invertible symmetry by setting $K=\Z_2^2$, where the condensation $G\to K$ is performed by condensing the diagonal electric particles $e_3$. 

This condensation enforces $a_3=0$ to $\Z_2$ gauge fields, therefore the resulting theory after the condensation becomes $K=\Z_2^2$ gauge theory with the twist
\begin{align}
    \omega|_K = \pi a_1\cup a_1\cup a_2~,
\end{align}
which has an automorphism symmetry $a_1\leftrightarrow a_2$. This automorphism is generated by
\begin{align}
    U = (-1)^{\int \frac{a_1\cup a_2}{2} + a_1\cup_1 \frac{d\tilde a_2}{2}}\prod_{e}(\text{SWAP}_{1,2})_e~,
\end{align}
where $\tilde a$ is $\Z_4$ lift of the $\Z_2$ gauge field $a$. The operator $U$ becomes $\Z_2$ symmetry of $\Z_2^2$ gauge theory, $U^2=1$. 

The condensation of $e_3$ is performed by the following projection into the states with $a_3=0$,
\begin{align}
    \mathcal{D}_{3}= \prod_e \frac{1+Z_{3,e}}{2}~.
\end{align}
Then, the non-invertible automorphism symmetry is generated by
\begin{align}
    \tilde U = \Pi \mathcal{D}_3  U \mathcal{D}_{3}\Pi~,
\end{align}
where $\Pi$ is a projection onto the ground states.

Then, the fusion rule of these symmetry operators is given by
\begin{align}
\begin{split}
    \tilde U\times \tilde U &= \Pi \mathcal{D}_3  U (\mathcal{D}_{3}\Pi\mathcal{D}_3) U\mathcal{D}_3\Pi \\
    &= \Pi \mathcal{D}_3  U \left({1 +(-1)^{\int a_1a_2 + a_1a_1}} \right) U\mathcal{D}_3\Pi \\
    &=  \left({1 +(-1)^{\int a_1a_2 + a_2a_2}} \right)\Pi  \mathcal{D}_3\Pi \\
    &=  \mathcal{C}_3 + (-1)^{\int a_1a_2 + a_2a_2}\times \mathcal{C}_3~,
\end{split}
\end{align}
where $\mathcal{C}_3$ is a condensation operator of the $e_3$ Wilson line,
\begin{align}
    \mathcal{C}_3\propto \sum_{\gamma\in H_1(M,\Z_2)} \left(\prod_{e\subset \gamma} Z_{3;e}\right)\times \Pi~.
\end{align}
Therefore, the fusion of the non-invertible automorphism $\tilde U$ splits into the operators $\mathcal{C}_3$ and its combination with the gauged SPT defect $(-1)^{\int a_1a_2 + a_2a_2}$. This gauged SPT defect is a nontrivial $\Z_2$ symmetry of $\Z_2^3$ gauge theory.

\subsubsection{Example: $U(1)\times U(1)$ Maxwell theory in 3+1d with mixed theta term}
\label{subsub:maxwell}

Consider $U(1)\times U(1)$ Maxwell gauge theory in 3+1d with the mixed theta term
\begin{equation}
    \omega=\frac{\theta}{(2\pi)^2}da_1 da_2~,
\end{equation}
where $a_1,a_2$ are the two $U(1)$ gauge fields, and $\theta\sim \theta+2\pi$.

The gauge group $U(1)\times U(1)$ has the automorphism
\begin{equation}
    \rho:\quad (g_1,g_2)\rightarrow (g_1+g_2,g_2) \ \text{ mod }2\pi~,
\end{equation}
where the $U(1)$ elements are expressed as the phase $e^{ig}$ with $g\sim g+2\pi$. The automorphism transforms the gauge fields by $a_2\to a_2+a_1$, and generates $\mathbb{Z}$ group.
Under the automorphism $\rho$, the topological action changes by a fractional Chern-Simons term:
\begin{equation}
    \omega\rightarrow \omega+d\alpha,\quad \alpha= \frac{\theta}{4\pi^2}a_1da_1~.
\end{equation}
Thus the automorphism symmetry is decorated with fractional quantum Hall (FQH) state:
\begin{equation}
    U_\rho=(\text{Fractional quantum Hall state with }\sigma_{xy}=\theta/\pi)V_\rho~.
\end{equation}
There are three cases:
\begin{itemize}
    \item When $\theta=0$ mod $2\pi$, the FQH state is absent, or can be removed by stacking with bosonic integer quantum Hall (IQH) states.
The automorphism symmetry is a 0-form $\mathbb{Z}$ symmetry that does not participate in higher symmetries.

    \item When $\theta=\pi$ mod $2\pi$, the FQH state is a fermionic integer quantum Hall state. Since the theory is bosonic, the fermionic integer quantum Hall state cannot be removed; in such case, the automorphism symmetry is mixed with the magnetic one-form symmetry to become a 2-group symmetry. To see this, we note that the automorphism $\rho$ changes the action by
    \begin{equation}
        \frac{\pi}{4\pi^2}\int da_1 da_1=\pi\int \frac{da_1}{2\pi}\frac{da_1}{2\pi}=\pi\int \frac{da_1}{2\pi}\cup (w_2+w_1^2)=\int \frac{da_1}{2\pi}\Delta B^M~,
    \end{equation}
    where $\Delta B^M=\pi (w_2+w_1^2)$ is the change of the background 2-form for the $U(1)$ magnetic 1-form symmetry with conserved charge $\oint \frac{da_1}{2\pi}$. Since the automorphism symmetry changes the background of a 1-form symmetry, they become a 2-group symmetry \cite{Benini:2018reh,Cordova:2018cvg}.

\item When $\theta=2\pi \frac{p}{q}$ for co-prime integers $p,q$ and $\theta\neq 0,\pi$ mod $2\pi$, the fractional quantum Hall state on the automorphism symmetry is non-invertible, and the automorphism symmetry becomes a non-invertible 0-form symmetry. This is similar to the non-invertible 0-form symmetries considered in \cite{Choi:2022jqy,Cordova:2022ieu}.

\end{itemize}

\subsubsection{Example: $\mathbb{Z}_N$ Walker-Wang model in 3+1d}
\label{sec:WWhighergroup}

Consider two-form gauge theory in 3+1d with gauge group $G=\mathbb{Z}_{N}$ and action \cite{Kapustin:2014gua,Gaiotto:2014kfa,Hsin:2018vcg}
\begin{equation}
    2\pi\int \frac{p}{2N}{\cal P}(b)~,
\end{equation}
where $b$ is the 2-form gauge field with holonomy in $\mathbb{Z}_N$. 
The theory can also be described on the lattice via Walker-Wang model \cite{Walker:2011mda,von_Keyserlingk_2013}. 
For any integer $r$ such that $\gcd(r,N)=1$, consider the automorphism $b\rightarrow rb$ mod $N$. The action changes by
\begin{equation}
    2\pi\frac{p(r^2-1)}{2N}\int {\cal P}(b)~.
\end{equation}
\begin{itemize}
    \item When $p(r^2-1)=0$ mod $2N$, the action is invariant and this is a 0-form symmetry that does not participate in higher group, i.e. one can couple the symmetry with background gauge field without activating other backgrounds.

    \item When $p(r^2-1)=N$ mod $2N$, the action can be made invariant by turning the symmetry into a 2-group. 
The change of the action is compensated by shifting the background of one-form symmetry generated by $\int b$:
\begin{equation}
    \frac{2\pi}{N}\int b\cup B~.
\end{equation}
Under the transformation $b\rightarrow b\rightarrow rb$, $B$ transforms as
\begin{equation}
    B\rightharpoonup r^{-1}B+\frac{N}{2} w_2~.
\end{equation}
Note that it is not consistent to turn off $B_2=0$ due to the shift $w_2$, and thus the symmetry is a nontrivial 2-group \cite{Benini:2018reh,Cordova:2018cvg}.

\item When $p(r^2-1)\neq 0,N$ mod $2N$, the action can be made invariant by turning the symmetry into a non-invertible symmetry.
We decorate the domain wall that generates the automorphism symmetry with a TQFT that has $\mathbb{Z}_N$ anyon with topological spin $\frac{p(r^2-1)}{2N}$ mod 1. Since $p(r^2-1)\neq 0,N$ mod $2N$, this is a nontrivial TQFT, and the symmetry becomes non-invertible. 
This is similar to the FQH state used to decorating the chiral symmetry in QED$_4$ discussed in \cite{Choi:2022jqy,Cordova:2022ieu}.

\end{itemize}

\subsection{Automorphism symmetry and 't Hooft anomalies}

The gauged SPT operator that decorate the automorphism symmetry can also result in 't Hooft anomalies mixed with center 1-form symmetry. This is similar to the anomalies in twisted gauge theories discussed in \cite{Barkeshli:2022edm}.

\paragraph{$U(1)$ gauge theory in 1+1d}

As an example, consider $U(1)$ gauge theory in 1+1d, where the $U(1)$ gauge field has theta term $\theta_{2d}=\pi$ for $U(1)$ gauge field $a$:
\begin{equation}
\omega=    \pi\frac{da}{2\pi}~.
\end{equation}
The gauge group has charge conjugation automorphism
\begin{equation}
    \rho: \quad g\rightarrow -g~,
\end{equation}
where the $U(1)$ elements are the phases $e^{ig}$.
Under the automorphism $\rho$, the topological action changes by
\begin{equation}
    \omega\rightarrow \omega+d\alpha,\quad \alpha=a~.
\end{equation}
Thus the automorphism symmetry is
\begin{equation}
    U_\rho=e^{i\int a}V_\rho~.
\end{equation}
Since the automorphism symmetry carries charge 1 under the center 1-form symmetry, we conclude that there is a nontrivial mixed anomaly between the automorphism symmetry and the 1-form symmetry.

Similarly, for $U(1)$ gauge theory coupled to two charge-1 scalars with $SO(3)$ global symmetry (where the scalars transform as the doublet with fractionalized $SO(3)$ symmetry), the charge Wilson line $e^{i\int a}$ carries fractional charge under the $SO(3)$ global symmetry, and thus there is a mixed anomaly between the charge conjugation symmetry and the $SO(3)$ symmetry (for related discussion, see e.g. \cite{Gaiotto:2017yup,Komargodski:2017dmc,Sulejmanpasic:2018upi}).

\subsection{Phase transitions with automorphism symmetry}

While the examples we have discussed so far are pure gauge theories, we can couple the gauge fields to matter in representation $R$ that is invariant under the automorphism of the gauge group. The discussions still apply to these gauge theories with matters. The topological term of the gauge fields can also result from fermion matter fields (see e.g. \cite{Alvarez-Gaume:1984zst}), where weakly coupled fermions can give rise to nontrivial topological responses as described by free-fermion SPT phases (e.g. \cite{Wen_2012,Witten:2015aba}).

Consider $U(1)\times U(1)$ gauge theory in 2+1d coupled to two scalars $\Phi_1,\Phi_2$ with charges $(q_1,q_2)$ and $(q_2,q_1)$ with $q_1\neq q_2$ under the two $U(1)$ gauge groups. The gauge theory has a mixed Chern-Simons term
\begin{equation}
    \omega=\frac{k}{2\pi}a_1 da_2~,
\end{equation}
where $a_1,a_2$ are the two $U(1)$ gauge fields. The theory has SWAP symmetry for the swap automorphism $\rho$ that exchanges the two $U(1)$s. The swap automorphism exchanges $\Phi_1\leftrightarrow \Phi_2$ and
 $a_1\leftrightarrow a_2$. Under the transformation, the mixed Chern-Simons term changes as
\begin{equation}
    \omega\rightarrow \frac{k}{2\pi}a_2da_1=\omega +d\alpha,\quad \alpha=\frac{k}{2\pi}a_2a_1~.
\end{equation}
 Thus the automorphism symmetry is
 \begin{equation}
     U_\rho=e^{\frac{ik}{2\pi}\int a_2a_1}V_\rho~.
 \end{equation}
We note that the automorphism symmetry still has order 2:
\begin{equation}
    U_\rho(M)\cdot U_\rho(M)=e^{\frac{ik}{2\pi}\int_M a_2a_1}V_\rho\cdot e^{\frac{ik}{2\pi}\int_M a_2a_1}V_\rho=e^{\frac{ik}{2\pi}\int_M (a_2a_1+a_1a_2)}=1~.
\end{equation}

In addition to the automorphism 0-form symmetry, the theory can also have 1-form symmetry, which shifts the gauge fields $a_1,a_2$ by flat connections $a_1\rightarrow a_1+\lambda_1$, $a_2\rightarrow a_2+\lambda_2$.
The mixed Chern-Simons term changes by
\begin{equation}
    \frac{k}{2\pi}\int a_1 da_2\rightarrow
    \frac{k}{2\pi}\int a_1 da_2+\frac{k}{2\pi}\int (\lambda_1 da_2+da_1\lambda_2+\lambda_1 d\lambda_2)~.
\end{equation}
On the right hand side, the last term is an 't Hooft anomaly. The other terms $\lambda_1 da_2+da_1 \lambda_2$ breaks the 1-form symmetry explicitly to $\mathbb{Z}_k\times \mathbb{Z}_k$, where $\lambda_1,\lambda_2$ have holonomies in $\frac{2\pi}{k}\mathbb{Z}$:
\begin{equation}
    \lambda_1=\frac{2\pi m_1}{k},\quad \lambda_2=\frac{2\pi m_2}{k},\quad m_1,m_2\in\mathbb{Z}_k~.
\end{equation}
Furthermore, the electric charges of the matter fields break the 1-form symmetry to
\begin{equation}
    q_1 \lambda_1+q_2\lambda_2=0,\quad q_2\lambda_1+q_1\lambda_2=0\quad (\text{mod }2\pi)~.
\end{equation}
In other words, the 1-form symmetry is 
\begin{equation}
{\cal G}^{(1)}:=\left\{(m_1,m_2)\in\mathbb{Z}_k\times\mathbb{Z}_k:q_1m_1+q_2m_2=q_2m_1+q_1m_2=0\text{ mod }k\right\}\subset \mathbb{Z}_k\times\mathbb{Z}_k~.
\end{equation}
For example, if $q_1,q_2$ are both multiple of $k$, then the 1-form symmetry is the same as $\mathbb{Z}_k\times\mathbb{Z}_k$. 
The 1-form symmetry has a diagonal subgroup $m_1=m_2\in \frac{k}{\gcd(q_1+q_2,k)}$ invariant under the swap automorphism.

We note that due to the gauged SPT defect decorating the automorphism symmetry, under 1-form transformation that shifts the gauge fields $a_1,a_2$ the automorphism symmetry acquires a phase, which indicates a mixed 't Hooft anomalies of the symmetries.

The scalars can have a quartic potential $V(\Phi_1,\Phi_2)=V(\Phi_2,\Phi_1)$. When the parameters in the potential takes values such that scalars condense with $\langle \Phi_1\rangle\neq \langle \Phi_2\rangle$, the swap automorphism symmetry is spontaneously broken. On the other hand, when the parameters are such that the scalars do not condense\footnote{
The potential can be constructed using the method in e.g. \cite{Cordova:2018qvg}.
}, the theory flows to a gapped phase with the mixed Chern-Simons term that describes untwisted $\mathbb{Z}_k$ gauge theory enriched with unbroken  automorphism symmetry that becomes the electro-magnetic duality symmetry. In this phase, the 1-form symmetry is enlarged from ${\cal G}^{(1)}$ to $\mathbb{Z}_k\times\mathbb{Z}_k$.
The theory then realizes a symmetry-breaking phase transition for the automorphism symmetry. In particular, the automorphism symmetry can have mixed anomalies with the center 1-form symmetry. For the special case $k=2$, the anomaly is discussed in \cite{Barkeshli:2021ypb}.

\section{Higher-group symmetry from interactions with gauged SPT symmetries}
\label{sec:higher group gauged SPT}

\begin{figure}[t]
    \centering
    \includegraphics[width=0.25\linewidth]{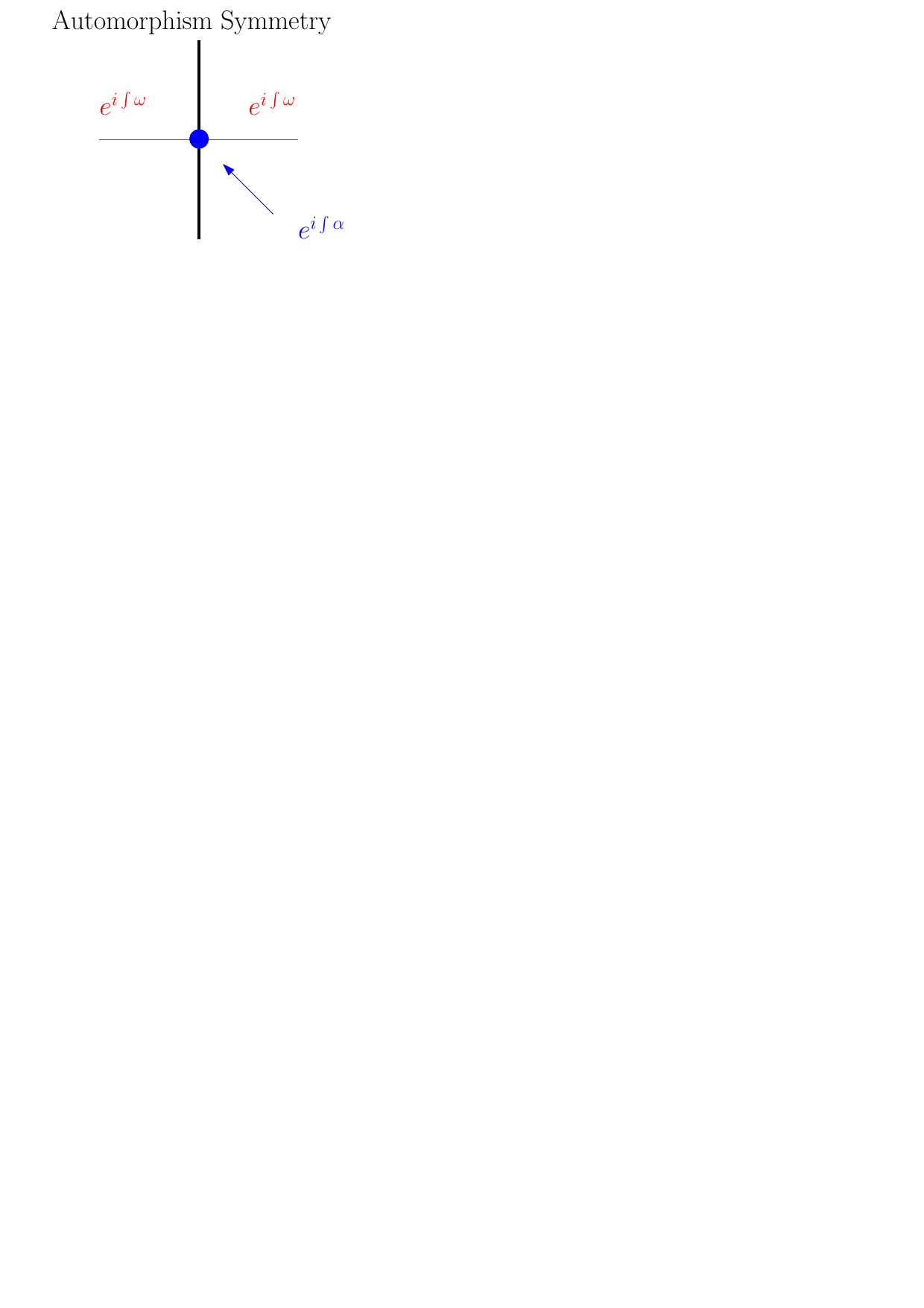}
    \caption{Junction of automorphism symmetry (black) in untwisted gauge theory with gauged SPT symmetry (red) for cocycle $\omega$ of the gauge group. For automorphism that changes $\omega$ by $d\alpha$, the junction (blue) is decorated with $e^{i\int\alpha}$.}
    \label{fig:junction}
\end{figure}

While the discussion so far has focused on automorphism symmetries in twisted gauge theories, we can extend the discussion to the interaction between automorphism symmetry and gauged SPT symmetries \cite{Hsin:2019fhf,barkeshli2023codimension,Barkeshli:2022edm,Barkeshli:2023bta,Hsin:2024nwc}, i.e. twist on submanifolds instead of the entire spacetime. This leads to higher-group symmetry involving the automorphism and gauged SPT symmetries in generic gauge theory, including untwisted ones.

For simplicity, we will focus on untwisted gauge theories in the following; the discussion can generalize to twisted gauge theories as well, where some of the gauged SPT symmetries become trivial \cite{Barkeshli:2022edm}.

Consider gauged SPT symmetry with group cocycle $\omega$ intersecting the automorphism symmetry as shown in Fig.~\ref{fig:junction}. 
Locally at the intersection, it is automorphism symmetry applied on the submanifold that supports the gauged SPT defects.
There are several cases where automorphism symmetries form higher-group symmetry:
\begin{enumerate}
    \item 
    Let us consider a $k$-dimensional gauged SPT symmetry of untwisted $G$ gauge theory, described by $[\omega]\in H^k(BG,U(1))$.
    If the automorphism preserves the gauged SPT symmetry $[\rho^*\omega]=[\omega]$, the automorphism at the intersection can be decorated with additional $(k-1)$-dimensional gauged SPT defect $e^{i\int \alpha}$, with $\alpha$ satisfying $d\alpha=\rho^*\omega-\omega$; at the intersection with the SPT symmetry defect $\omega$, the automorphism defect has the form $e^{i\int \alpha}V_\rho$. 
    As seen in Sec.~\ref{subsec:automorphism extended by SPT}, such a decorated automorphism symmetry is in general extended by the gauged SPT symmetry; the fusion of the operators $e^{i\int \alpha}V_\rho$ at the defect $\omega$ produces the $(k-1)$-dimensional gauged SPT symmetry $e^{i\int \eta}$.
    This implies a higher-group symmetry \cite{Benini:2018reh}: junction of automorphism symmetries intersecting the gauged SPT defect with cocycle $\omega$ in $k$ dimensions produces a gauged SPT defect $e^{i\int \eta}$ supported on a lower $(k-1)$-dimensional submanifold. See Fig.~\ref{fig:fusing}.
    We note that while $\alpha$ is not a cocycle, $\eta$ is a cocycle and thus the intersects emits a well-defined gauged SPT symmetry defect that generates a higher-form symmetry.

\item Similarly, if we fuse the $k$-dimensional gauged SPT defects $e^{i\int\omega}$ in the presence of the automorphism defects, this emits a $(k-1)$-dimensional gauged SPT defect $e^{i\int \beta}$. 

Consider an automorphism $\rho:G\to G$ and gauged SPT symmetry with cocycle $[\omega]\in H^k(BG,U(1))$ on $k$-dimensional submanifold $M$. 
Suppose that the cohomology class $[\omega]$ is invariant under the automorphism $\rho$, then one introduces a $(k-1)$-cochain induced by the automorphism $\alpha\in C^{k-1}(BG,U(1))$, satisfying
\begin{align}
    d\alpha = \rho^*\omega-\omega~.
\end{align}

The gauged SPT symmetry is a cyclic Abelian group $\mathbb{Z}_n$, where $e^{in\int_M\omega}=1$. 
We can take a cocycle representative $\omega$ such that $n\omega=0$ mod $2\pi$, then $n\omega$ is invariant under automorphism. This implies that $nd\alpha=0$ mod $2\pi$, and thus the cochain $\alpha$ is a $1/n$ fraction of a discrete cocycle,
\begin{equation}
    \alpha=\frac{1}{n}\beta~,
    \label{eq:betaformula}
\end{equation}
with $[\eta]\in H^{r-2}(BG,U(1))$.
Consider the junction of $e^{i[n_1]\int\omega},e^{i[n_2]\int \omega}$ and $e^{i[n_1+n_2]\int\omega}$, where $[n_1]$ denotes the restriction to $0,1,\cdots,n-1$.
When the junction intersects with the automorphism symmetry, this also induces a junction of $e^{i[n_1]\int \alpha}$, $e^{i[n_2]\int \alpha}$, $e^{i[n_1+n_2]\int \alpha}$, which differs by a gauged SPT symmetry that emits from the junction
\begin{equation}
    e^{i\frac{[n_1]+[n_2]-[n_1+n_2]}{n}\int \beta}~.
\end{equation}
Thus the $(D-k)$-form symmetry generated by $e^{i\int \beta}$ together with automorphism 0-form symmetry and $(D-k-1)$-form symmetry forms a higher group.

\item We have seen that the interaction between an automorphism and gauged SPT symmetries produces a gauged SPT defect $e^{i\int\eta}$ or $e^{i\int\beta}$.
This gauged SPT operator further forms a higher-group symmetry together with magnetic defects of gauge theory~\cite{Barkeshli:2022edm}. Namely, the commutation relation between $e^{i\int\eta}$ (or $e^{i\int\beta}$) and magnetic defect operators produces a lower-dimensional gauged SPT symmetry, which is again regarded as a higher-group symmetry. Such an example is found in e.g., $\Z_N\times\Z_M$ gauge theory in generic dimensions, as described in Sec.~\ref{subsubsec:ZNZM in generic dim}.

    \item If the automorphism does not preserve the gauged SPT symmetry $[\rho^*\omega]\neq [\omega]$, then the automorphism permutes the gauged SPT symmetry $e^{i\int \omega}$ into $e^{i\int \rho^*\omega}$. As discussed previously, the intersection is a non-invertible symmetry on the submanifold.
    
    On the other hand, if the junction further intersects with the condensation defects that break $G$ to subgroup $K$ such that $[\omega]$ restricted to $K$ is preserved by the automorphism, then the automorphism can again act projectively at the intersection as above to give a higher-group symmetry.
    
\end{enumerate}

\begin{figure}[t]
    \centering
    \includegraphics[width=0.85\linewidth]{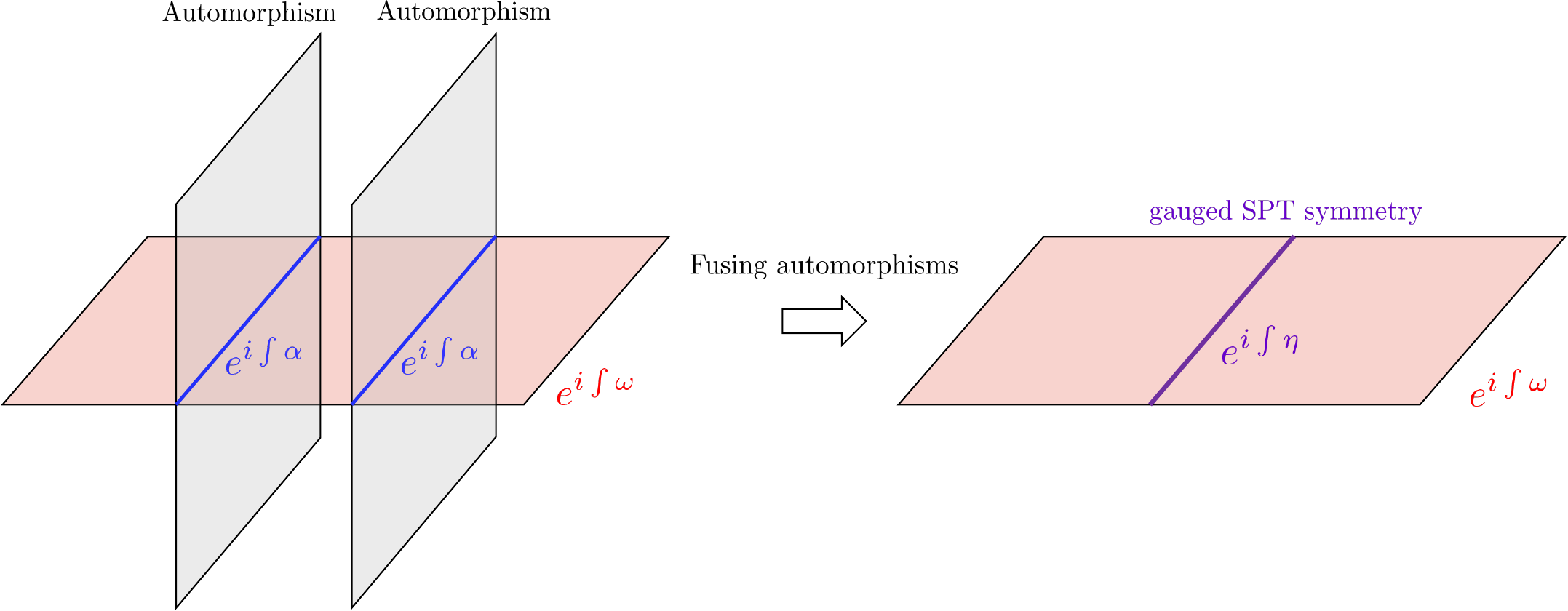}
    \caption{Fusing automorphism symmetry defects in the presence of a gauged SPT symmetry defect $\omega$ in $k$ dimensions. This leaves a $(k-1)$-dimensional gauged SPT symmetry $\eta$, regarded as a higher-group symmetry.}
    \label{fig:fusing}
\end{figure}

\subsection{Example: $\mathbb{Z}_2\times\mathbb{Z}_2$ gauge theory in 3+1d}
\label{subsubsec:Z2Z2 in 3+1d}

Let us focus on the second scenario for getting the higher-group symmetry described above.
We can characterize this higher-group structure concretely using operators on a lattice model as follows. Let us consider a twisted Hilbert space on a space in the form of $M^{d-1}\times S^1_\rho$, where $S^1_\rho$ direction is twisted by the automorphism symmetry $\rho$. Then, the gauged SPT symmetry $U_\omega$ along a $k$-submanifold $M_{k-1}\times S^1_\rho$ has the form of
\begin{align}
    U_\omega(M_{k-1}\times S^1_\rho) = e^{i\int_{M_{k-1}\times S^1_\rho}\omega}\times e^{i\int_{M_{k-1}}\alpha}~.
\end{align}
Suppose that $U_\omega$ has order $n$, then the $n$-th power of $U_\omega$ at the twisted Hilbert space is given by
\begin{align}
    (U_\omega(M_{k-1}\times S^1_\rho))^n = e^{i\int_{M_{k-1}}\beta}~,
\end{align}
with $\beta$ is given by Eq.~\eqref{eq:betaformula}. 

An example of such a higher-group symmetry is found in untwisted $\Z_2\times\Z_2$ gauge theory in 3+1d.
Let us consider the theory on a 3d torus $T^3$ space, and insert the automorphism defect along $T^2_{xy}$ that exchanges two $\Z_2$s, so that we get a twisted boundary condition along $S^1_z$. This defect Hamiltonian has a non-Clifford logical gate generated by the 3d gauged SPT operator with $\omega = \frac{\pi}{2}a_1 da_2 $,
\begin{align}
    U_\omega(T^3) = e^{i\int_{T^3} \omega} e^{i\int_{T^2_{xy}}\alpha}~,
\end{align}
where $\alpha = \frac{\pi}{2}(a_1a_2 + da_1\cup_1 a_2)$. The integral of $\alpha$ on $T_{xy}^2$ generates the Controlled-$S$ gate. 

In particular, the fusion of two gauged SPT defects $U_\omega$ at the automorphism defect generates the 1+1d gauged SPT defect
\begin{align}\label{eqn:highergroupexample}
    e^{i\pi \int_{T^2_{xy}} a_1a_2}
\end{align}

The equation (\ref{eqn:highergroupexample}) implies the non-trivial higher group structure between 0-form symmetry and 1-form symmetry. This effect is given by the equations among background gauge fields
\begin{align}
    dC_2 = A_1 \frac{d\hat C_1}{2}~,
\end{align}
where $C_2$, $C_1$ are the backgrounds of 1+1d, 2+1d gauged SPT defects, and $A_1$ is the background of the automorphism defect.

\subsection{Example: $\mathbb{Z}_N\times\mathbb{Z}_M$ gauge theory in general spacetime dimension}
\label{subsubsec:ZNZM in generic dim}

Let us focus on the third scenario for getting the higher-group symmetry.
Consider untwisted $\mathbb{Z}_N\times\mathbb{Z}_M$ gauge theory in $D$ spacetime dimension. The theory has gauged SPT $(D-4)$-form symmetry supported on 3-manifold $M_3$:
\begin{equation}
    e^{\frac{2\pi i k}{N}\int_{M_3}  a_1\cup \frac{da_2}{M}}~,
    \label{eq: SPT in ZN x ZM}
\end{equation}
where $a_1,a_2$ are the $\mathbb{Z}_N,\mathbb{Z}_M$ gauge fields, respectively. We note that the operator has order $\ell:=\gcd(M,N)$: multiplying the topological term by $\ell$ gives: (we can express $\gcd(N,M)=q N+r M  $ for integers $q,r$) 
\begin{align}
    &\frac{2\pi k(q N+r M)}{N} a_1\cup \frac{da_2}{M}
    =2\pi kq a_1\cup \frac{da_2}{M}+\frac{2\pi kr }{N} a_1\cup da_2\cr
    &\quad =2\pi kq a_1\cup \frac{da_2}{M}
    +\frac{2\pi kr }{N} da_1\cup a_2
    -\frac{2\pi kr }{N} d\left(a_1\cup a_2\right)\cr 
    &\quad =-\frac{2\pi kr }{N} d\left(a_1\cup a_2\right)\text{ mod }2\pi=d\alpha \text{ mod }2\pi~.
\end{align}

The automorphism we will consider can be defined for the cases where $\gcd(1+\ell,N)=1$. This is equivalent to $\gcd(1+\ell,N)=\gcd(1+\ell,N-\frac{N}{\ell}(1+\ell))=\gcd(1+\ell,\frac{N}{\ell})$. For such cases, there is an automorphism
\begin{equation}
    \rho:(n,m)\in\mathbb{Z}_N\times\mathbb{Z}_M \; \rightarrow \; \left((1+\ell)n,m\right)~.
\end{equation}
At the intersection with the operator \eqref{eq: SPT in ZN x ZM}, the automorphism is decorated with the 2d gauged SPT operator
\begin{equation}
    e^{i\int \alpha}=e^{-\frac{2\pi i kr}{N}\int a_1\cup a_2}~.
\end{equation}

When the magnetic operator for the $a_2$ gauge field with flux $m$, which has dimension $(D-2)$ and generates a $\mathbb{Z}_M$ 1-form symmetry, intersects the domain wall that generates the automorphism 0-form symmetry, the intersection has
\begin{equation}
    e^{-\frac{2\pi i km r}{N}\int a_1}~.
\end{equation}
The above Wilson line operator generates a $(D-2)$-form symmetry. Thus the automorphism 0-form symmetry, the gauged SPT $(D-4)$-form symmetry, the 1-form symmetry, and the $(D-2)$-form symmetry mix to become a $(D-1)$-group symmetry for $D\geq 4$. The higher group symmetry can be described in terms of their background fields $C_1,B_{D-3},A_2,B_{D-1}$ respectively: from the junction of defects, we find the relation \cite{Benini:2018reh}
\begin{equation}
    dB_{D-1}=-kr A_2\cup B_{D-3}\cup C_1~.
\end{equation}

\section{Automorphism Symmetries in $\mathbb{Z}_N^m$ Gauge Theories}
\label{sec:twistedZNm}

In this section we study the automorphism symmetries in twisted $\mathbb{Z}_N^m$ gauge theories. We will discuss various automorphisms and study the properties of the corresponding automorphism symmetries.
We note that examples of such automorphism symmetries have been discussed in \cite{Hsin2024_non-Abelian,Kobayashi:2025cfh}.

\subsection{Automorphism Symmetry in $\mathbb{Z}_2^N$ Gauge Theory}
\label{sec:Z2general}

Consider $\mathbb{Z}_2^m$ 1-form gauge Theory in $D$ spacetime dimension. The topological action for the gauge fields $\{a_i\}_{i=1}^m$ is
\begin{equation}
    \pi\int a_{i_1} \cup a_{i_2}\cup\cdots a_{i_D}~, 
\end{equation}
where $a_i,a_j$ can be the same gauge fields. For example, the Levin-Gu topological response is $\pi \int a^3$ for a single $\mathbb{Z}_2$ gauge field $a$.
We will study what becomes of the automorphism of $\mathbb{Z}_2^m$ in the presence of the topological action.

The automorphism of the group $\mathbb{Z}_2^m$ is given by linear transformations of the $N$ generators:
\begin{equation}
    \text{Aut}(\mathbb{Z}_2^m)=SL(m,\mathbb{Z}_2)~.
\end{equation}
For example, the group $\mathbb{Z}_2\times\mathbb{Z}_2$ has 3 nontrivial element, and the automorphism of the group is isomorphic to the $S_3$ permutation.

\subsubsection{Example: $\mathbb{Z}_2^3$ gauge theory with $\mathbb{Z}_7$ symmetry}

Consider twisted $\mathbb{Z}_2^3$ gauge theory in 2+1d with the following topological action: (here, $a_i$ for $i=1,2,3$ are the gauge fields for the three $\mathbb{Z}_2$s in the gauge group)
\begin{equation}
\omega=    \pi\left(a_1\cup a_2\cup a_3+ a_1^2\cup a_2 + a_3^2\cup a_2 + a_1^2\cup a_3 +\sum_{i} a_i^3\right)~.
\end{equation}
Namely, the topological action is obtained by summing over all type I, II, III cocycles of $\Z_2^3$.
 We will focus on the following automorphism of the $\mathbb{Z}_2^3$ gauge group: in terms of the gauge field, it acts as
\begin{equation}
    \rho:\quad a_1\rightarrow a_3,\quad a_2\rightarrow a_1+a_3,\quad a_3\rightarrow a_2~.
\end{equation}
One can verify that the automorphism has order 7:
\begin{align}
\rho^7:\quad     (a_1,a_2,a_3)&\rightarrow (a_3,a_1+a_3,a_2)\rightarrow (a_2,a_2+a_3,a_1+a_3)\rightarrow (a_1+a_3,a_1+a_2+a_3,a_2+a_3)\cr
    &\rightarrow (a_2+a_3,a_1+a_2,a_1+a_2+a_3)\rightarrow (a_1+a_2+a_3,a_1,a_1+a_2)\rightarrow(a_1+a_2,a_3,a_1)\cr 
    &\rightarrow (a_1,a_2,a_3)~.
\end{align}

The action transforms under the automorphism $\rho$ as
\begin{align}
    \omega\to\omega + d\alpha~,
\end{align}
where
\begin{equation}
    \alpha= \pi\left( \frac{a_1\cup a_3}{2}+\frac{a_2\cup a_1}{2} + a_2\cup_1(a_3^2+a_1^2)+a_3\cup_1 (a_1^2+a_1a_2) + a_1\cup (a_1\cup_1a_3) + (a_1\cup_1a_3)\cup a_3 \right)~.
\end{equation}
The automorphism symmetry is $U_\rho=e^{i\int \alpha}V_\rho$.

We note that since the automorphism symmetry is invertible, we can ask whether the symmetry is an extension or higher group for the $\mathbb{Z}_7$ automorphism. Since $\gcd(7,2)=1$ and the automorphism does not have fixed points, there is no nontrivial extension structure. Thus the automorphism symmetry is simply a $\mathbb{Z}_7$ 0-form symmetry.

\subsection{Automorphism symmetry in $\mathbb{Z}_N^m$ gauge theory}

\subsubsection{SWAP automorphism}

Consider $\mathbb{Z}_N^m$ $r$-form gauge theory with action
\begin{equation}
\frac{2\pi}{N}\int  a_1\cup a_2\cup\cdots a_m~.
\end{equation}
The spacetime dimension is $D=rm$.

Consider the swap symmetry that exchanges 
\begin{equation}
    \rho:\quad (a_1,a_2)\rightarrow (a_2,(-1)^{r}a_1)~.
\end{equation}
Using the identity
\begin{align}\label{eqn:swappermutationdecorZN-0}
    &a_1\cup a_2=(-1)^{r}a_2\cup  a_1 -d\left(a_1\cup_{1}a_2\right)+da_1\cup_{1}a_2+(-1)^{r}a_1\cup_{1}da_2\cr &=(-1)^{r}a_2\cup  a_1 -d\left(a_1\cup_{1}a_2\right)\text{ mod }N~
\end{align}
we find that after the symmetry, the action maps to
\begin{align}\label{eqn:swappermutationdecorZN-1}
    &a_1\cup a_2\cdots a_n\; \rightarrow\; (-1)^{r}(a_2\cup a_1)\cup\cdots a_n\cr 
    &\qquad\qquad\;\;\;\;\quad\quad\;\; =(a_1\cup a_2)\cup\cdots a_m+d\alpha\text{ mod }N~,\cr
    &\alpha=(-1)^r\left(a_1\cup_1a_2\right)\cup a_3\cup \cdots\cup a_m~.
\end{align}
Thus the automorphism is 
\begin{equation}\label{eqn:swappermutationdecorZN}
    U_\rho(M)=e^{i\int\alpha}V_\rho=e^{\frac{2\pi i (-1)^r}{N}\int_M \left(a_1\cup_1a_2\right)\cup a_3\cup \cdots\cup a_m}V_\rho(M)~,
\end{equation}
where $M$ is the support where we perform the automorphism.

Let us take the square of the automorphism symmetry: on any closed submanifold $M$,
\begin{equation}
    U_\rho(M)^2=
    e^{\frac{2\pi i (-1)^r}{N}\int_M \left(a_1\cup_1a_2\right)\cup a_3\cup \cdots\cup a_m}e^{\frac{2\pi i (-1)^r}{N}\int_M \left((-1)^ra_2\cup_1a_1\right)\cup a_3\cup \cdots\cup a_m}=1~.
\end{equation}
Thus the symmetry is a $\mathbb{Z}_2$ 0-form symmetry.

\subsubsection{Higher group symmetry in untwisted gauge theories in $(mr+1)$ spacetime dimension}

The discussion above for the swap automorphism symmetry can also apply to untwisted gauge theories in $(rm+1)$ spacetime dimension. In this case, we consider the gauged SPT defects $\exp\left(\frac{2\pi i}{N}\int a_1\cup \cdots a_m\right)$ and the automorphism 0-form symmetry. The above discussion implies that this gauged SPT defect is invariant under the automorphism 0-form symmetry. However, the intersection of these defects traps the gauged SPT defect $e^{i\int\alpha}$.

Now, let us further intersect the junction with a magnetic defect.
We will focus on the magnetic defect that induces the same holonomy for $a_1,a_2$: such magnetic defect is invariant under the automorphism.
Around the magnetic defect, we can replace $a_1,a_2$ by $d\vartheta$ where $\vartheta\sim \vartheta+1$ is the (normalized) angular variable around the magnetic defect. In this normalization, $d\vartheta$ has holonomy $0,1$ around a single circle. When such magnetic defect intersects the automorphism symmetry, the intersection induces the gauged SPT defect \cite{Barkeshli:2022edm}
\begin{equation}
    e^{\frac{2\pi (-1)^r}{N}\int d\vartheta \cup a_3\cdots a_m}=
    e^{\frac{2\pi (-1)^r}{N}\int a_3\cup\cdots\cup a_m}~,
    \label{eq:sourced gauged SPT m-auto}
\end{equation}
where we integrated the gauged SPT factor in the automorphism symmetry over a small circle with nontrivial holonomy $\oint d\vartheta=1$.
Thus the junction of (1) 0-form automorphism symmetry, (2) 0-form gauged SPT symmetry $\exp\left(\frac{2\pi i}{N}\int a_1\cup\cdots\cup a_m\right)$, and (3) 1-form symmetry that shifts $a_1,a_2$ at the same time, produces a higher $(2r)$-form symmetry generated by the gauged SPT defect $e^{\frac{2\pi (-1)^r}{N}\int a_3\cup\cdots\cup a_m}$ of codimension $(mr+1)-(m-2)r=2r+1$. This is a $(2r+1)$-group symmetry. See Fig.~\ref{fig:m_auto} for an illustration.

\begin{figure}[t]
    \centering
    \includegraphics[width=0.4\linewidth]{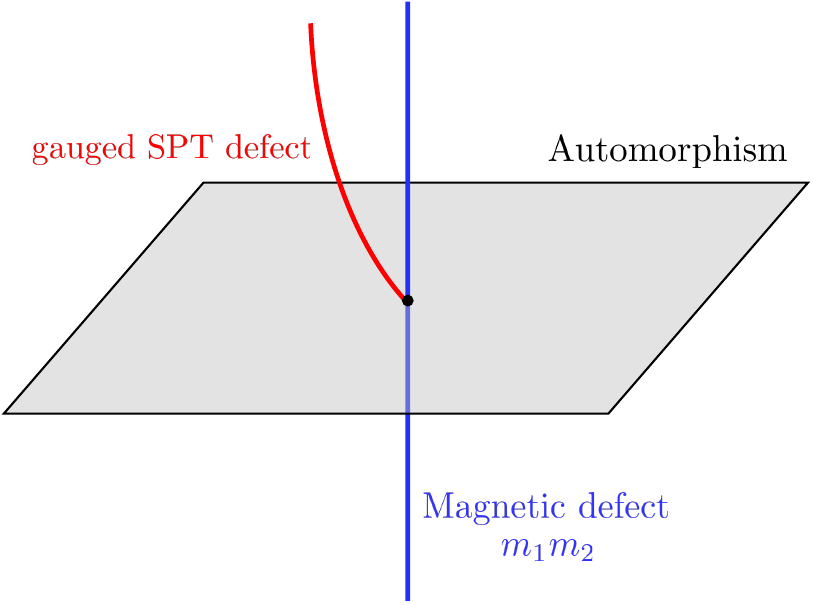}
    \caption{The intersection between the automorphism symmetry defect and the magnetic defect shifting $a_1,a_2$ (which we denote by $m_1m_2$ in the figure) sources a gauged SPT symmetry defect in Eq.~\eqref{eq:sourced gauged SPT m-auto}.}
    \label{fig:m_auto}
\end{figure}

\section{Transversal Non-Clifford Logical Gates from Automorphisms}
\label{sec:noncliffordgate}

In this section, we will construct non-Clifford gates induced by automorphisms of the underlying gauge group of homological codes. 
For the application here, the gauge group $G$ will be copies of $\mathbb{Z}_2$ or $\mathbb{Z}_N$ qudits. The automorphism symmetries $U_\rho$ of the twisted $G$ gauge theory is an emergent symmetry of the lattice model, and they give rise to logical gate in the topological quantum codes.

\subsection{Review of qudit Clifford hierarchy and universal quantum computation}

The Clifford hierarchy \cite{Gottesman1999hierarchy} provides a structured way to classify logical gates of error-correcting codes according
to how they conjugate the Pauli group. It plays a central role in
understanding universality in fault-tolerant quantum computations.

Let $\mathcal{P}_n$ denote the $n$-qubit Pauli group.  Following Ref.~\cite{Gottesman1999hierarchy}, the Clifford hierarchy is defined recursively by
\begin{align}
\mathcal{C}_1 &= \mathcal{P}_n, \\
\mathcal{C}_{k+1}
  &= \left\{\, U \;\middle|\; U \mathcal{P}_n U^\dagger \subset \mathcal{C}_k \right\}.
\end{align}
Thus $\mathcal{C}_2$ is the usual Clifford group, i.e., the normalizer of the
Pauli group. The set $\mathcal{C}_k$ with $k\ge 3$ is not a group.
The hierarchy forms a nested sequence
\[
\mathcal{C}_1 \subseteq \mathcal{C}_2 \subseteq \mathcal{C}_3 \subseteq \cdots.
\]
Level $1$ contains Pauli operators, level $2$ the Clifford gates, and
level $3$ contains gates that map Pauli operators to Clifford operators under conjugation.  A canonical example of a level-$3$ gate on qubits is
the $T$ (or $\pi/8$) gate.

A non-Clifford gate is any unitary $U \notin \mathcal{C}_2$.  Such gates occupy levels $3$ and above in the hierarchy.  They are essential in quantum computation, since Clifford circuits alone can be efficiently classically simulated (the Gottesman-Knill theorem \cite{Gottesman:1998hu}).  To obtain a universal gate set, it is well-known that at least one non-Clifford gate is required:
\begin{theorem}
\cite{bravyi2005}     The full Clifford group together with any single non-Clifford gate
generates a universal (dense) subgroup of $SU(2^n)$.
\end{theorem}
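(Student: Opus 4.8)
The plan is to establish density of the group generated by the Clifford group $\mathcal{C}_2$ together with a single non-Clifford unitary $U \notin \mathcal{C}_2$ inside $SU(2^n)$. The natural tool is a theorem on closed subgroups of compact Lie groups: any closed subgroup of $SU(2^n)$ is a Lie subgroup, so the closure $\overline{G}$ of the group $G$ generated by $\mathcal{C}_2$ and $U$ is a compact Lie group whose Lie algebra $\mathfrak{g}$ is a subalgebra of $\mathfrak{su}(2^n)$. The strategy is to show $\mathfrak{g} = \mathfrak{su}(2^n)$, whence $\overline{G} = SU(2^n)$ by connectedness (after noting $G$ contains enough of $SU(2^n)$ to hit the identity component — one checks phases can be ignored since we work projectively / in $SU$).

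The key steps I would carry out in order. First, observe that the Pauli group $\mathcal{P}_n$ spans the full matrix algebra $\mathrm{Mat}_{2^n}(\mathbb{C})$, so the (traceless) Paulis span $\mathfrak{su}(2^n)$ as a vector space; hence it suffices to show that $\mathfrak{g}$ is invariant under conjugation by all of $\mathcal{C}_2$ and contains at least one element not proportional to any Pauli direction that $\mathcal{C}_2$-conjugation then sweeps out. Second, I would use that $\mathfrak{g}$ is an ideal-like object under $\mathrm{Ad}$: since $\mathcal{C}_2 \subset \overline{G}$, the algebra $\mathfrak{g}$ is stable under $\mathrm{Ad}(\mathcal{C}_2)$, and since the Clifford group acts transitively (up to signs) on nonzero Pauli operators modulo phases, any single nonzero "Pauli component" appearing in $\mathfrak{g}$ forces all of them to appear. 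Third, I would extract such a component from $U$: because $U \notin \mathcal{C}_2$, there is a Pauli $P$ with $UPU^\dagger \notin \mathcal{P}_n$ (up to phase), and by taking commutators of the one-parameter subgroup $\{U^t\}$ — or more carefully, using that $\overline{G}$ contains the one-parameter group through $U$ if $U$ has infinite order, or else using the finite-order case separately — one produces a Lie algebra element whose expansion in the Pauli basis is not supported on a single Pauli. Conjugating by $\mathcal{C}_2$ and taking linear combinations/brackets then generates a $\mathcal{C}_2$-invariant nonzero ideal of $\mathfrak{su}(2^n)$; since $\mathfrak{su}(2^n)$ is simple, this ideal is everything.

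The main obstacle is the case analysis around whether $U$ generates a continuous one-parameter subgroup. If $U$ has infinite order in $PSU(2^n)$, the closure of $\{U^k\}$ is a positive-dimensional torus, giving a genuine Lie algebra element to work with, and the argument above runs cleanly. If $U$ has finite order (as the $T$ gate essentially does), one cannot directly differentiate; instead I would combine $U$ with Clifford conjugates to build group commutators $[U, C U C^{-1}]$ and iterate, using a standard Solovay–Kitaev-type or "commutator grows the group" argument to show the generated group is infinite and its closure positive-dimensional — or, alternatively, invoke the classification of finite subgroups of $SU(2^n)$ / the fact that $\mathcal{C}_2$ is maximal among the relevant finite groups, so adjoining anything outside it already escapes every finite subgroup. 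Handling this cleanly, and correctly quotienting by the $U(1)$ of global phases so that "$SU(2^n)$" rather than $U(2^n)$ is the right target, is where the real care is needed; the rest is the simplicity of $\mathfrak{su}(2^n)$ plus Clifford transitivity on Paulis.
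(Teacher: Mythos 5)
The paper does not prove this theorem---it is recalled from \cite{bravyi2005} as background---so there is no in-paper argument to compare against. Your overall strategy (pass to the closure $\overline{G}$, a compact Lie group, and force its Lie algebra $\mathfrak{g}$ to be all of $\mathfrak{su}(2^n)$) is the right one and is in the spirit of the standard proof, but two steps do not hold up as written. First, the claim that conjugating by $\mathcal{C}_2$ and bracketing ``generates a $\mathcal{C}_2$-invariant nonzero ideal of $\mathfrak{su}(2^n)$; since $\mathfrak{su}(2^n)$ is simple, this ideal is everything'' is not correct: $\mathfrak{g}$ is a subalgebra, not an ideal, and closing under brackets within $\mathfrak{g}$ does not produce an ideal of the ambient $\mathfrak{su}(2^n)$---you would need to bracket with elements you do not have. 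Simplicity is not the right lever. What actually finishes this step is that $\mathcal{C}_2$ acts \emph{irreducibly} on $\mathfrak{su}(2^n)$ by conjugation, and transitivity of $\mathcal{C}_2$ on Pauli directions (which you invoke) is necessary but not sufficient for irreducibility of a monomial representation: a transitive permutation representation without signs, for instance, is reducible via the all-ones vector. The missing ingredient is that the Pauli subgroup of $\mathcal{C}_2$ already diagonalizes the adjoint action with pairwise-distinct characters $\beta\mapsto(-1)^{\langle\alpha,\beta\rangle}$, so any $\mathcal{C}_2$-invariant subspace must be a sum of single-Pauli lines, and transitivity then gives all of them. That Pauli-diagonalization step is only implicit in your ``Pauli component'' language and should replace the simplicity/ideal reasoning entirely.

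Second, for the case where $U$ has finite order you oscillate between a ``Solovay--Kitaev-type / commutator-grows-the-group'' heuristic and invoking maximality of $\mathcal{C}_2$ among finite subgroups. Only the latter is sound: Solovay--Kitaev presupposes density as a hypothesis, and there is no general principle guaranteeing that iterated group commutators starting from a finite group and one outside element escape every finite subgroup. The cited proof leans precisely on the Nebe--Rains--Sloane maximality result (the same works the paper cites for the qudit statement), which directly forces $\langle\mathcal{C}_2,U\rangle$ to be infinite, hence $\overline{G}$ positive-dimensional and $\mathfrak{g}\neq 0$. You should drop the Solovay--Kitaev allusion and state the maximality input explicitly. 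Your concern about $U(2^n)$ versus $SU(2^n)$ versus $PU(2^n)$ is legitimate and worth a line---the clean target is density of the projective image in $PU(2^n)$---but it is a bookkeeping matter compared with the two points above.
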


A common example is the Clifford+$T$ gate set for qubits.  Since $T$ is a level-$3$ non-Clifford gate, its addition completes universal gate set.

\paragraph{Clifford Hierarchy for $\mathbb{Z}_N$ qudits}

Consider a $d$-dimensional qudit with computational basis
$\{\lvert 0\rangle, \dots, \lvert d-1\rangle\}$.  The generalized Pauli
operators are
\begin{align}
X \lvert j\rangle &= \lvert j+1 \bmod d \rangle, \\
Z \lvert j\rangle &= \omega^j \lvert j \rangle,
\qquad \omega = e^{2\pi i/d}.
\end{align}
The generalized $n$-qudit Pauli group $\mathcal{P}_n$ is generated by each of $X$,
$Z$, and phases in the $n$ qudits.  The qudit Clifford group is 
\[
\mathcal{C}_2 = \left\{ U \;\middle|\; U \mathcal{P}_n U^\dagger \subseteq 
\mathcal{P}_n \right\}.
\]
The hierarchy is then recursively defined in the same fashion as qubits:
\begin{align}
\mathcal{C}_1 &= \mathcal{P}_d, \\
\mathcal{C}_{k+1} &= \left\{\, U \;\middle|\; U \mathcal{P}_n U^\dagger \in \mathcal{C}_k \right\}.
\end{align}

\paragraph{Universality for $\Z_N$ qudits}
For qudits of prime dimension $d$, one defines the two-qudit
CSUM (controlled-SUM) gate by
\[
\mathrm{CSUM}\, \lvert i \rangle \lvert j \rangle
 = \lvert i \rangle \lvert i + j \bmod d \rangle.
\]

There is a qudit analogue of the Clifford+$T$ universality for qubits: 
\begin{theorem} \cite{Howard_2012,Campbell_2012, nebe2000invariantscliffordgroups, nebe2003codesinvarianttheory}
    For prime-dimensional qudits, the following gate set is universal:
\[
\text{(full single-qudit Clifford group)}
\;\;+\;\; \mathrm{CSUM}
\;\;+\;\; \text{any single non-Clifford gate}~.
\]    
\end{theorem}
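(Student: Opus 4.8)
The plan is to split the statement into a purely group-theoretic generation step and a Lie-theoretic density step, the latter resting on the structural facts about the Clifford group recorded in the cited references. First I would note that, for prime $d$, the single-qudit Clifford gates together with CSUM already generate the entire $n$-qudit Clifford group $\mathcal{C}_2$: modulo global phases one has $\mathcal{C}_2\cong\mathbb{Z}_d^{2n}\rtimes Sp(2n,\mathbb{Z}_d)$, the single-qudit Paulis generate the translation part $\mathbb{Z}_d^{2n}$, the single-qudit Clifford group supplies $Sp(2,\mathbb{Z}_d)$ acting on each qudit, and CSUM between pairs of qudits supplies the remaining generators of $Sp(2n,\mathbb{Z}_d)$, using the standard generation of symplectic groups over a prime field by such ``two-block'' elements. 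It therefore suffices to show that $\mathcal{C}_2$ together with a single unitary $U\notin\mathcal{C}_2$ generates a group whose closure in $U(d^n)$ contains $SU(d^n)$; this is exactly universality up to an (irrelevant) global phase.

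\paragraph{Irreducibility of the adjoint action of $\mathcal{C}_2$.}
Set $N=d^n$. The Heisenberg--Weyl operators $\{P_{\mathbf a,\mathbf b}\colon(\mathbf a,\mathbf b)\in\mathbb{Z}_d^{2n}\}$ form a basis of $\mathfrak{gl}(N,\mathbb{C})$, with $P_{0,0}=I$ spanning the trivial summand. Under conjugation, a Clifford unitary covering $S\in Sp(2n,\mathbb{Z}_d)$ sends $P_{\mathbf a,\mathbf b}$ to a phase times $P_{S(\mathbf a,\mathbf b)}$, while conjugation by a Pauli multiplies $P_{\mathbf a,\mathbf b}$ by the character $\omega^{\langle(\mathbf c,\mathbf d),(\mathbf a,\mathbf b)\rangle}$. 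As the symplectic pairing on $\mathbb{Z}_d^{2n}$ is perfect, distinct labels give distinct weights, so the Pauli subgroup already splits the traceless part $\bigoplus_{(\mathbf a,\mathbf b)\neq 0}\mathbb{C}\,P_{\mathbf a,\mathbf b}$ into one-dimensional weight spaces, and any $\mathcal{C}_2$-invariant subspace is a union of them. Since $Sp(2n,\mathbb{Z}_d)$ acts transitively on the nonzero vectors of $\mathbb{Z}_d^{2n}$ (here one uses that $\mathbb{Z}_d$ is a field), the only $\mathcal{C}_2$-invariant subspaces are $0$ and everything; hence $\mathfrak{sl}(N,\mathbb{C})$ is an irreducible $\mathcal{C}_2$-module, and therefore $\mathfrak{su}(N)$ is an irreducible \emph{real} $\mathcal{C}_2$-module (its complexification being the irreducible $\mathfrak{sl}(N,\mathbb{C})$).

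\paragraph{Dichotomy and conclusion.}
Let $G=\overline{\langle\mathcal{C}_2,U\rangle}\subseteq U(N)$, a compact Lie group by Cartan's closed-subgroup theorem, with identity component $G^{\circ}$. Since $\mathcal{C}_2\subseteq G$ normalizes $G^{\circ}$, the Lie algebra of $G^{\circ}$ is invariant under the adjoint action of $\mathcal{C}_2$; projecting it modulo the central $\mathfrak{u}(1)$ yields a $\mathcal{C}_2$-invariant subspace of $\mathfrak{su}(N)$, which by the previous step is either $0$ or all of $\mathfrak{su}(N)$. In the second case $SU(N)\subseteq G$ and we are done. In the first case $\mathrm{Lie}(G^{\circ})\subseteq\mathfrak{u}(1)$, so $G^{\circ}$ is central and the image of $G$ in $PU(N)$ is a finite group; as $U\notin\mathcal{C}_2$, this finite group strictly contains the projective Clifford group, contradicting the maximality of the Clifford group among finite subgroups of $PU(d^n)$. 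Hence $SU(N)\subseteq G$, i.e. the given gate set is universal.

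\paragraph{Main obstacle.}
The nontrivial input is the last step: that the Clifford group cannot be enlarged to a strictly larger \emph{finite} subgroup of $PU(d^n)$. This is far from elementary; it is precisely what forces the cited works \cite{nebe2000invariantscliffordgroups,nebe2003codesinvarianttheory,Howard_2012,Campbell_2012} to compute the full ring of polynomial invariants of the Clifford group and identify it with that of a maximal finite group, and I would invoke it rather than reprove it. A secondary technical point is to make the generation and weight-space arguments uniform in $d$: for odd prime $d$ the Heisenberg--Weyl operators carry nontrivial phases and the covering $\mathcal{C}_2\to Sp(2n,\mathbb{Z}_d)$ is governed by the Weil representation, while for $d=2$ the analogous extension has its own sign bookkeeping; in every case the irreducibility conclusion of the second step is unaffected, so the logic of the proof goes through verbatim.
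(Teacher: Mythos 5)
The paper states this theorem only by citation and gives no proof of its own, so there is nothing internal to compare your argument against. Your proof sketch is the standard argument and is correct in outline. The three ingredients---(i) single-qudit Cliffords plus CSUM generate the full $n$-qudit Clifford group via the $\mathbb{Z}_d^{2n}\rtimes Sp(2n,\mathbb{Z}_d)$ structure, (ii) the adjoint action of $\mathcal{C}_2$ on $\mathfrak{su}(d^n)$ is irreducible by the Pauli weight-space decomposition plus transitivity of $Sp(2n,\mathbb{Z}_d)$ on nonzero vectors (this is where primality of $d$ enters, making $\mathbb{Z}_d$ a field), and (iii) the closed-subgroup dichotomy---are all sound, and you are right that the entire weight of the theorem sits on the maximality of the projective Clifford group among finite subgroups of $PU(d^n)$, which is exactly the content of the Nebe--Rains--Sloane invariant-theory papers.

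Two small technical points worth tightening. First, in the case $\pi(\mathrm{Lie}\,G^\circ)=\mathfrak{su}(N)$ you conclude $SU(N)\subseteq G$; this needs a one-line bracket argument: $[\mathfrak{g},\mathfrak{g}]\subseteq\mathfrak{su}(N)$ automatically (brackets are traceless), and its image under $\pi$ is $[\mathfrak{su}(N),\mathfrak{su}(N)]=\mathfrak{su}(N)$ by semisimplicity, so $\mathfrak{su}(N)\subseteq\mathfrak{g}$. Second, for the final contradiction ``$U\notin\mathcal{C}_2$ implies the image of $G$ in $PU(N)$ strictly contains the projective Clifford group'' you need ``non-Clifford'' to mean non-Clifford projectively (i.e.\ not a Clifford times a global phase); this is the standard convention but should be stated, since a global phase on a Clifford gate obviously cannot yield universality. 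Finally, a bibliographic remark: of the four references the paper cites, Howard--Vala and Campbell--Anwar--Browne prove universality for \emph{specific} qudit non-Clifford gates (the $\pi/8$-analogues) by explicit construction and magic-state methods, and do not go through maximality; the ``\emph{any} non-Clifford gate'' form of the statement indeed relies on Nebe--Rains--Sloane, exactly as you identify.
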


\subsection{Review of transversal CCZ gate in 5+1d non-Abelian self-correcting memory}

As an example of transversal non-Clifford logical gate from automorphism symmetry, let us review the $\Z_2^3$ 2-form gauge theory construction in \cite{Hsin2024_non-Abelian} 
in spacetime dimension $D=5+1$. This 5+1d code is an example of self-correcting quantum memory with non-Abelian excitations, i.e. a non-Abelian self-correcting memory~\cite{Hsin2024_non-Abelian}.
The theory has topological action
\begin{equation}
    \omega=\pi a_1\cup a_2\cup a_3~
\end{equation}
with 2-form $\mathbb{Z}_2^3$ gauge fields $a_1,a_2,a_3$.

We focus on the swap automorphism of the gauge group that acts on the gauge fields as
\begin{equation}
    \rho:\quad (a_1,a_2)\to (a_2,a_1)~.
\end{equation}
As shown in \cite{Hsin2024_non-Abelian}, the action transforms under the automorphism by $d\alpha$ with $\alpha=\pi(a_1\cup_1 a_2)\cup a_3$, and thus the swap symmetry is
\begin{equation}
    U_\rho=(-1)^{\int (a_1\cup_1 a_2)\cup a_3}V_\rho~.
    \label{eq:CCZqubic}
\end{equation}
This swap symmetry associated with the operator \eqref{eq:CCZqubic} generates a non-Clifford gate due to the $e^{i\int\alpha}$ gauged SPT factor, with specific choices of a 5d spatial manifold $M$. 
In \cite{Hsin2024_non-Abelian}, we discussed the example of
/For instance, let us choose 
$M$ being the 5d Wu manifold, $M=\text{Wu}$. Since $H^2(\text{Wu},\Z_2)=\Z_2$ and $H^4(\text{Wu},\Z_2)=0$, the code space consists of three logical qubits.
It has been shown in Ref.~\cite{Hsin2024_non-Abelian} that due to the additional operator \eqref{eq:CCZqubic}, the above swap symmetry implements a transversal non-Clifford gate
\begin{align}
\overline{ \text{SWAP}}_{1,2} \overline{\text{CCZ}}_{1,2,3}~.
\end{align}
Remarkably, this self-correcting memory with a non-Clifford gate surpasses the distance scaling $d=\Theta(n^{2/5})$, and surpasses that of 3+1d and 6+1d color code $d=\Theta(n^{1/3})$. See Ref.~\cite{Hsin2024_non-Abelian} for detailed discussions.

\subsection{4th level Clifford hierarchy transversal gate in 2+1d qudits}

Let us consider twisted $\mathbb{Z}_3^3$ gauge theory in 2+1d with topological action
\begin{equation}
    \omega=\frac{2\pi}{3} a_1\cup a_2\cup a_3~.
\end{equation}
where $a_i$ are the three $\mathbb{Z}_3$ gauge fields. This twisted gauge theory is a non-Abelian topological order realized in a Clifford stabilizer model of $\Z_3$ qudits.
The stabilizers are
\begin{align}
    &{\cal S}^{X,1}_v=
    \left(\prod_{v\in \partial e} (X^1_e)^{s(v,e)}\right) \prod_{e',e'':\int \tilde v\cup \tilde e'\cup \tilde e''=1}\mathrm{CZ}^{2,3}_{e',e''}\;,\cr 
    &{\cal S}^{X,2}_v=
    \left(\prod_{v\in \partial e} (X^2_e)^{s(v,e)}\right) \prod_{e',e'':\int  \tilde e'\cup\tilde v\cup \tilde e''=1}\mathrm{CZ}^{1,3}_{e',e''}\;,\cr 
    &{\cal S}^{X,3}_v=
    \left(\prod_{v\in \partial e} (X^3_e)^{s(v,e)}\right) \prod_{e',e'':\int  \tilde e'\cup \tilde e''\cup \tilde v=1}\mathrm{CZ}^{1,2}_{e',e''}\;,\cr 
    &{\cal S}^i_f=\prod_{e\in\partial f} \left(Z^i_e\right)^{s(e,f)}\text{ for } i=1,2,3~,
\end{align}
where for chains $p\in\partial q$ the sign $s(p,q)=+1$ for $p$ appears in $\partial q$ with a plus sign, and $s(p,q)=-1$ for $p$ appears in $\partial q$ with a minus sign. The cochain $\tilde r$ for chain $r$ is the cochain that takes value 1 on $r$ and 0 on other chains. In the expression, 
$X^i,Z^i$ are the $\mathbb{Z}_3$ Pauli $X,Z$ operators for qutrits of species $i$, and $\text{CZ}^{i,j}_{e,e'}|r,r'\rangle=e^{\frac{2\pi i}{3} r_i r_j}|r_i,r_j\rangle$ for $i,j\in\{1,2,3\}$ is the qutrit CZ operator for the qutrit $Z^i|r_i\rangle=e^{\frac{2\pi i r_i}{3}}|r_i\rangle$ on edge $e$ of species $i$ and the qutrit $Z^j|r_j\rangle=e^{\frac{2\pi i r_j}{3}}|r_j\rangle$ on edge $e'$ of species $j$.

In fact, the theory is dual to an untwisted gauge theory with a non-Abelian gauge group given by the extension of $\mathbb{Z}_3\times\mathbb{Z}_3$ by $\mathbb{Z}_3$ specified by the cocycle $e^{\frac{2\pi i}{3}a_1\cup a_3}$. The non-Abelian gauge field in this dual description is $(a_1,a_2,\tilde a_3)$ for the dual gauge field $\tilde a_3$ of $a_3$. Here, we will focus on the description using the twisted $\mathbb{Z}_3^3$ gauge theory.

As before, we will focus on the automorphism $a_1\leftrightarrow a_2$.
The automorphism symmetry is
\begin{equation}
    U_\rho(M)= e^{\frac{2\pi i}{3}\int_M (a_1\cup_1 a_2)\cup a_3} V_\rho(M)~.
\end{equation}

Let's take the space to be $M=T^2=S^1_x\times S^1_y$. Denote the holonomy of $\mathbb{Z}_3$ gauge fields $a_1,a_2,a_3$ as $n_i^x,n_i^y\in\{0,1,2\}$ for $i=1,2,3$, the factor that decorates the automorphism symmetry gives
\begin{equation}
    e^{-\frac{2\pi i}{3}\left(n_1^xn_2^xn_3^y-n_1^y n_2^yn_3^x\right)}~.
\end{equation}
The equation of motions from the topological action imply that the holonomies need to obey the condition 
\begin{equation}
    n_i^xn_j^y-n_i^yn_j^x=0\text{ mod }3~,
\end{equation}
where we used the property that $a_1\cup_1 a_2(01)=a_1(01)a_2(01)$ on 1-cell $(01)$.
For example, $n_1^x=n_2^x=n_3^x=1$, $n_1^y=n_2^y=n_3^y=2$ is a valid set of holonomies, and the decoration factor gives $e^{\frac{4\pi i}{3}}$.

To be concrete, consider the logical subspace $n_3^x=1$. This can be implemented by measuring the logical Pauli $\bar Z=e^{2\pi i n_3^x/3}=\prod_x Z_3$ operator for the third $\mathbb{Z}_3$ along the $x$ direction using lattice surgery \cite{Horsman:2011hyt}, where $Z_3$ is the physical Pauli $Z$ on the third qutrit.

In this subspace, the other holonomies can be solved as $n_1^y=n_1^xn_3^y$, $n_2^y=n_2^xn_3^y$ from $n_1^xn_3^y=n_1^yn_3^x$, $n_2^xn_3^y=n_2^yn_3^x$. One can verify that the remaining holonomy condition $n_1^x n_2^y=n_1^yn_2^x$ is also automatically satisfied. Thus the subspace is labeled by independent holonomies $n_1^x,n_2^x,n_3^y$.
The logical gate gives
\begin{equation}
    e^{-\frac{2\pi i}{3}\left(n_1^xn_2^xn_3^y-n_1^xn_2^x(n_3^y)^2\right)}=e^{-\frac{2\pi i}{3}n_1^xn_2^x n_3^y(1-n_3^y)}~.
\end{equation}
In other words, on the logical subspace of 3 logical qutrits $\mathbb{C}[\mathbb{Z}_3^3]=\text{Span}\{|n_1^x,n_2^x,n_3^y\rangle\}$, the symmetry acts as
\begin{equation}\label{eqn:USWAPlogical}
    U_\text{SWAP}|n_1^x,n_2^x,n_3^y\rangle=e^{-\frac{2\pi i}{3}n_1^xn_2^x n_3^y(1-n_3^y)}|n_2^x,n_1^x,n_3^y\rangle~.
\end{equation}
The computation in (\ref{eqn:swappermutationdecorZN-0})-(\ref{eqn:swappermutationdecorZN}) shows that
\begin{theorem}
    The logical gate (\ref{eqn:USWAPlogical}) is realized by the finite depth circuit
\begin{equation}\label{eqn:USWAPcircuit}
    U_\mathrm{SWAP}=e^{-\frac{2\pi i}{3}\int_\text{space}(a_1\cup_1 a_2)\cup a_3 }\left(\prod\mathrm{SWAP}_{1,2}\right)=\left(\prod_{e,e':\int \tilde e\cup \tilde e'=1} \left(\mathrm{CCZ}_{e,e,e'}^{1,2,3} \right)^\dag \right)\left(\prod_{e''}\mathrm{SWAP}^{1,2}_{e'',e''}\right)~,
\end{equation}
where $\mathrm{SWAP}_{1,2}$ swaps the two physical qutrits on each edge for the first two $\mathbb{Z}_3$s. 

In the last expression, $\left(\mathrm{CCZ}_{e,e,e'}^{1,2,3} \right)^\dag|r_1,r_2,r_3\rangle=e^{-\frac{2\pi i}{3}r_1 r_2 r_3}$ for the physical qutrits $|r_1\rangle\otimes|r_2$ on the edge $e$ for the two species $1,2$, and the physical qutrit $|r_3\rangle$ on the edge $e'$ for the species $3$, respectively. 
In the above, $Z^i|r_i\rangle=e^{\frac{2\pi i r_i}{3}}|r_i\rangle$.
In the integral $\int \tilde e\cup \tilde e'$, $\tilde e$ is the 1-cochain that takes value 1 on edge $e$ and 0 on other edges, and similarly for $\tilde e'$.
 The SWAP operation $\mathrm{SWAP}_{e'',e''}^{1,2}$ exchanges the physical qutrits of species $1,2$ on the same edge $e''$.

\end{theorem}
Let us show that the logical gate (\ref{eqn:USWAPlogical}) is non-Clifford:
\begin{theorem}
    The logical gate $U_\text{SWAP}$ in (\ref{eqn:USWAPlogical}) is a non-Clifford gate of fourth Clifford hierarchy for the $\mathbb{Z}_3^3$ logical qutrits. 
\end{theorem}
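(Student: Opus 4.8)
The plan is to treat $U_{\mathrm{SWAP}}$ concretely as a unitary on the code space $\mathbb{C}[\mathbb{Z}_3^3]=\mathrm{Span}\{|n_1,n_2,n_3\rangle\}$ of three logical qutrits (abbreviating $n_1^x,n_2^x,n_3^y$ as $n_1,n_2,n_3$) and to compute how it conjugates the logical Pauli generators $X_i,Z_i$. First I would factor $U_{\mathrm{SWAP}}=\mathrm{SWAP}_{12}\,D$, where $\mathrm{SWAP}_{12}$ is a qudit permutation (hence Clifford, $\mathrm{SWAP}_{12}\in\mathcal{C}_2$) and $D=\mathrm{diag}(\omega_3^{P(n)})$ is the diagonal gate with $\omega_3=e^{2\pi i/3}$ and $P(n)=-n_1n_2n_3(1-n_3)=-n_1n_2n_3+n_1n_2n_3^2$. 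The structural observation to record is that $P$ is a polynomial of degree $4$ over $\mathbb{Z}_3$ in which no variable appears to a power $\ge 3$, so its degree is not reduced by the relation $n^3\equiv n$; concretely $n_3(1-n_3)$ is the genuinely nonlinear function taking the values $0,0,1$ on $n_3=0,1,2$. I will use freely that each level $\mathcal{C}_k$ is invariant under conjugation by Clifford gates and is closed under multiplication on either side by Clifford gates and by Paulis (standard consequences of the recursive definition), so that it suffices to locate $D$ in the hierarchy.

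For the upper bound $D\in\mathcal{C}_4$ I would exploit the discrete-derivative identity for diagonal gates, $\mathrm{diag}(\omega_3^{f})\,X_j\,\mathrm{diag}(\omega_3^{-f})=\mathrm{diag}(\omega_3^{\,f(n+e_j)-f(n)})\,X_j$, noting that $f\mapsto f(n+e_j)-f(n)$ strictly lowers the degree. The three derivatives of $P$ are: $P(n+e_3)-P(n)\equiv -n_1n_2n_3$, giving $D X_3 D^\dagger=\mathrm{CCZ}_{123}^{\dagger}X_3$; and $P(n+e_1)-P(n)=-n_2n_3(1-n_3)$ (symmetrically in $1\leftrightarrow 2$), giving $D X_1 D^\dagger=\mathrm{diag}(\omega_3^{-n_2n_3(1-n_3)})X_1=\mathrm{CZ}_{23}^{\dagger}\cdot\mathrm{diag}(\omega_3^{\,n_2n_3^2})\cdot X_1$; the $Z_i$ commute with $D$. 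Each right-hand side lies in $\mathcal{C}_3$: the cubic diagonal gate $\mathrm{diag}(\omega_3^{\,n_2n_3^2})$ has discrete derivatives that are quadratic phase gates — products of $\mathrm{CZ}$ powers and the single-qutrit Clifford phase gate $\mathrm{diag}(\omega_3^{\,n^2})$ — hence it is in $\mathcal{C}_3$, and left-multiplying a $\mathcal{C}_3$ element by a Clifford and right-multiplying by a Pauli keeps it in $\mathcal{C}_3$; likewise $\mathrm{CCZ}_{123}$ sends each $X_i$ to a $\mathrm{CZ}$ times a Pauli, so $\mathrm{CCZ}_{123}\in\mathcal{C}_3$ and $\mathrm{CCZ}_{123}^{\dagger}X_3\in\mathcal{C}_3$. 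Therefore $D\,\mathcal{P}\,D^\dagger\subset\mathcal{C}_3$, so $D\in\mathcal{C}_4$ and hence $U_{\mathrm{SWAP}}=\mathrm{SWAP}_{12}\,D\in\mathcal{C}_4$.

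For the sharp lower bound I would reuse $D X_3 D^\dagger=\mathrm{CCZ}_{123}^{\dagger}X_3$ and observe that the qutrit gate $\mathrm{CCZ}_{123}$ is \emph{not} Clifford: conjugating $X_1$ by it produces $\mathrm{CZ}_{23}X_1$, and $\mathrm{CZ}_{23}$ is not a Pauli, so $\mathrm{CCZ}_{123}\notin\mathcal{C}_2$; since $\mathcal{C}_2$ is a group this forces $\mathrm{CCZ}_{123}^{\dagger}X_3\notin\mathcal{C}_2$. Thus $D\,\mathcal{P}\,D^\dagger\not\subset\mathcal{C}_2$, i.e.\ $D\notin\mathcal{C}_3$, and since $\mathrm{SWAP}_{12}\in\mathcal{C}_2$ this gives $U_{\mathrm{SWAP}}\notin\mathcal{C}_3$. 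Combining the two bounds, $U_{\mathrm{SWAP}}\in\mathcal{C}_4\setminus\mathcal{C}_3$, which is the assertion. The one point requiring care — rather than a genuine obstacle — is that $\mathcal{C}_k$ for $k\ge 3$ fails to be a group, so the generator-by-generator reasoning must be routed through the discrete-derivative formula, which only ever yields (one diagonal gate)$\times$(one Pauli) and never uncontrolled products; the remaining work is routine $\mathbb{Z}_3$ bookkeeping, with the essential nontrivial input being that the degree-$4$ monomial $n_1n_2n_3^2$ genuinely survives reduction modulo $3$, pushing the gate to level $4$ rather than level $3$. One could alternatively quote the general correspondence between the degree of a phase polynomial and its Clifford-hierarchy level, but the direct computation above is self-contained.
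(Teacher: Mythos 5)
Your proof is correct and follows the same core strategy as the paper: compute the conjugate $U_{\mathrm{SWAP}}\bar X_3 U_{\mathrm{SWAP}}^\dagger$ via the discrete derivative $P(n+e_3)-P(n)$ and recognize the resulting extra phase $\omega_3^{2n_1n_2n_3}=\omega_3^{-n_1n_2n_3}$ as $\overline{\mathrm{CCZ}}^{\dagger}$, a third-level gate, which places $U_{\mathrm{SWAP}}$ in level 4. You are more thorough than the paper's terse argument: you explicitly factor off $\mathrm{SWAP}_{12}$ as Clifford, establish the upper bound $D\in\mathcal{C}_4$ by checking all Pauli conjugates (not just $\bar X_3$), and verify $\mathrm{CCZ}\notin\mathcal{C}_2$ so that the placement in the fourth level is genuinely sharp --- all points the paper leaves implicit.
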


\begin{proof}
    Under conjugating by Pauli $\bar X$ that shifts $n_3^y\rightarrow n_3^y+1$, the gate transforms by additional gate
\begin{equation}
e^{-\frac{2\pi i}{3}n_1^xn_2^x (n_3^y+1)(-n_3^y)-\left(-\frac{2\pi i}{3}n_1^xn_2^x n_3^y(1-n_3^y)\right)}
=
 e^{\frac{4\pi i}{3}n_1^xn_2^xn_3^y}~.
\end{equation}
Thus the gate maps Pauli $\bar X$ to $(\overline{\text{CCZ}})^2=\overline{\text{CCZ}}^\dag$ of third Clifford hierarchy. This shows that $U_\text{SWAP}$ is in the fourth Clifford hierarchy and is a non-Clifford gate.
\end{proof}

\paragraph{More general $\mathbb{Z}_N$ qudits}

We remark that instead of qutrit, we can also consider general $\mathbb{Z}_N$ qudits with $N\geq 3$. The theory is twisted $\mathbb{Z}_N^3$ gauge theory with topological action
\begin{equation}
    \omega=\frac{2\pi}{N} a_1\cup a_2\cup a_3~,
\end{equation}
for $\mathbb{Z}_N$ gauge fields $a_1,a_2,a_3$.
The only exception $N=2$ gives trivial gate since $(n_3^y)^2=n_3^y$ mod 2.
For $\mathbb{Z}_N$ qudits, the logical gate from the swap automorphism symmetry $a_1\leftrightarrow a_2$ is
\begin{equation}
    U_\text{SWAP}|n_1^x,n_2^x,n_3^y\rangle=
    e^{-\frac{2\pi i}{N}(n_1^xn_2^xn_3^y-n_1^xn_2^x (n_3^y)^2)}=|n_2^x,n_1^x,n_3^y\rangle=e^{-\frac{2\pi i}{N}n_1^x n_2^x n_3^y(1-n_3^y)}|n_2^x,n_1^x,n_3^y\rangle~.
\end{equation}
Similarly, under conjugating by Pauli $\bar X$ that changes $n_3^y\rightarrow n_3^y+1$, the gate changes by
\begin{equation}
    e^{\frac{4\pi i}{N}n_1^x n_2^x n_3^y}~.
\end{equation}
For $N\geq 3$, this is the square of the CCZ gate for $\mathbb{Z}_N$ qudits, and it belongs to the 3rd level Clifford hierarchy.
Thus the logical gates $U_\text{SWAP}$ for $N\geq 3$ give a $\mathbb{Z}_N$ qudit non-Clifford gate of 4th level $\mathbb{Z}_N$ qudit Clifford hierarchy.

\paragraph{Surpassing the generalized qubit Bravyi-K\"onig bound}

We note that the model is a $\mathbb{Z}_N$ qudit stabilizer model with $N\geq 3$, and it can do transversal logical gate in 4th Clifford hierarchy in a 2d Clifford stabilizer code. This surpasses the conjectured bound for qubits proposed in \cite{Kobayashi:2025cfh} for qubit Clifford hierarchy stabilizer models.

\subsection{New transversal logical CS gate in 3+1d $\mathbb{Z}_2\times\mathbb{Z}_2$ toric code}

Consider untwisted $\mathbb{Z}_2\times\mathbb{Z}_2$ gauge theory in 3+1d. There are two symmetries we can consider: the automorphism symmetry that swaps two $\Z_2$s, and the gauged SPT symmetry
\begin{equation}
    U=(-1)^{\int a_1\cup a_2^2}~,
\end{equation}
where $(a_1,a_2)$ are the $\mathbb{Z}_2\times\mathbb{Z}_2$ gauge fields. 

We will consider the setting where we first insert a SWAP domain wall defect in the space (which is a surface in the 3D space), and then we apply the gauged SPT symmetry at fixed time on the entire space. In such scenario, the gauged SPT symmetry is modified: along the 2D intersection of the gauged SPT symmetry and the SWAP domain wall defect, the gauged SPT symmetry $U$ is decorated with additional cochain $e^{i\int\alpha}$:
\begin{equation}
    e^{i\int \alpha}=i^{\int \left(a_1\cup a_2+da_1\cup_1 a_2\right)}~.
\end{equation}
We will show that such cochain implies that the gauged SPT symmetry gives a transversal CS logical gate.

Let us take the space to have topology of $T^3$, i.e. periodic boundary conditions in $x,y,z$ directions. The symmetries are:
\begin{itemize}
    \item SWAP domain wall defect on the $x,y$ plane at fixed $z$.
     The SWAP domain wall defect is on the $x,y$ plane at fixed $z$. The SWAP domain wall restricts the logical qubits for the two $\mathbb{Z}_2$s along the perpendicular $z$ direction to be identified. See Fig.~\ref{fig:autotwist} for an illustration. We will take the support to be the $x,y$ plane at $z=0$.
     
    \item Gauged SPT symmetry on the entire space.
    
    The total unitary operator of the gauged SPT symmetry in the presence of the automorphism defect is
\begin{align}\label{eqn:3Dautotwist}
    &U_\text{SPT}\cr 
    &=(-1)^{\int a_1\cup a_2^2}i^{\int_\text{z=0} \left(a_1\cup a_2+da_1\cup_1 a_2\right)}\cr
    &=
    \left(\prod_{e_1,e_2,e_3:\int \tilde e_1\cup \tilde e_2\cup \tilde e_3=1}\mathrm{CCZ}^{1,2,2}_{e_1,e_2,e_3}\right)
    \left(\prod_{e_4,e_5:\int_{z=0} \tilde e_4\cup \tilde e_5=1}\mathrm{CS}^{1,2}_{e_4,e_5}\right)
    \left(\prod_{f,e_6:\int_{z=0} \tilde f\cup_1 \tilde e_6=1}\prod_{e'\in\partial f}\left(\mathrm{CS}_{e',e_6}^{1,2}\right)^{s(e',f)}\right)~,\cr
\end{align}
where the first integral is over the entire space and the second integral is over the support of the automorphism defect. In the last expression, $s(e',f)=+1$ if $e'$ appears in $\partial f$ with plus sign, while $s(e',f)=-1$ if $e'$ appears in $\partial f$ with minus sign. 

\end{itemize}

\begin{figure}[t]
    \centering
    \includegraphics[width=0.4\linewidth]{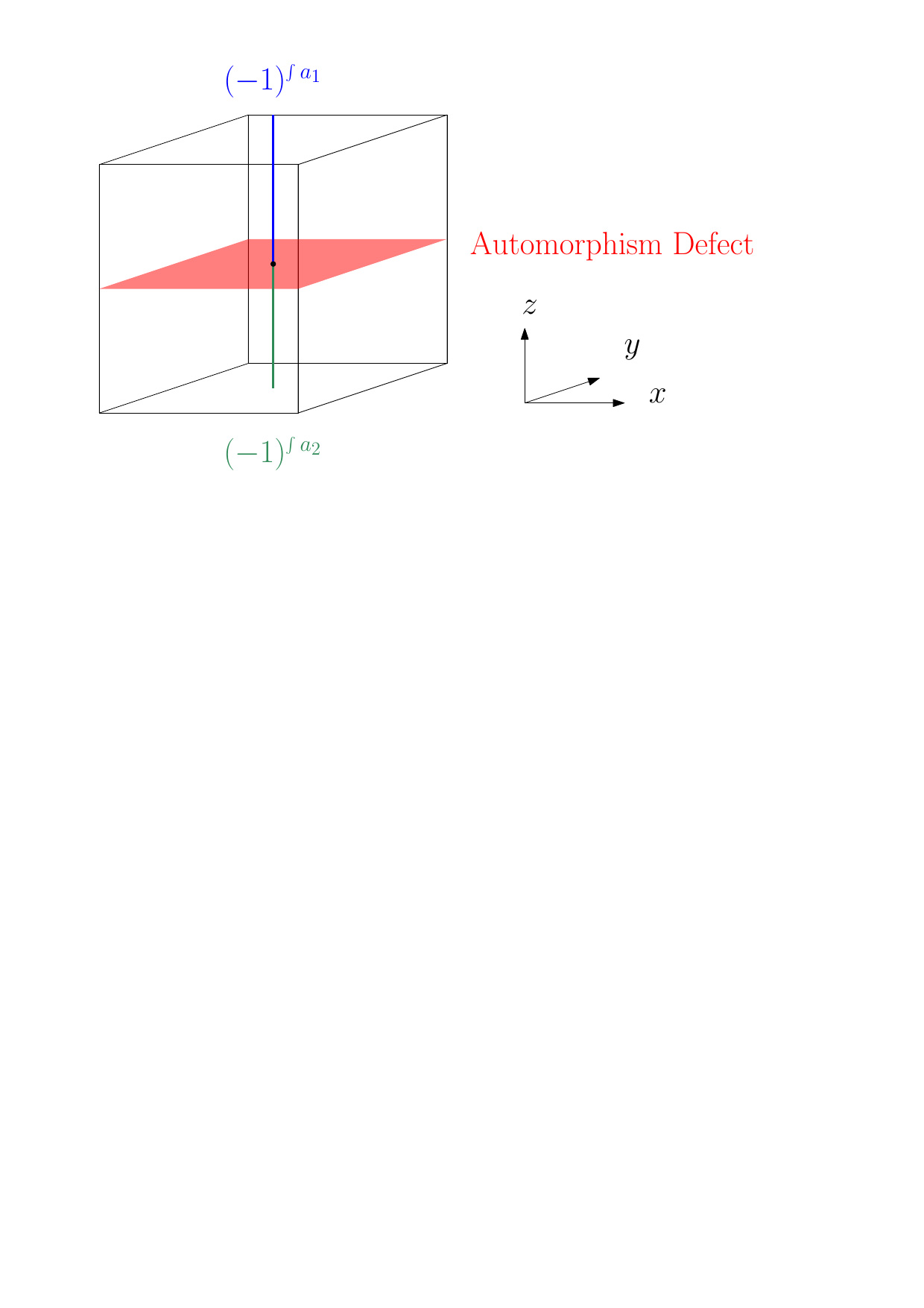}
    \caption{3D space with automorphism defect inserted at fixed $z$ coordinate. The holonomy of $a_1,a_2$ along the $z$ direction is identified.}
    \label{fig:autotwist}
\end{figure}

The action of the gauged SPT symmetry depends on whether there is a SWAP domain wall defect:
\begin{itemize}
    \item In the absence of the SWAP domain wall, the operator $U$ acts trivially on the logical states, since $a_1^2=0,a_2^2=0$ on $T^3$.

\item In the presence of the SWAP domain wall defect, applying $U$ gives logical CS gate for the qubits along the $x,y$ directions parallel to the SWAP domain wall. Denote the holonomy of $a_i$ along $x,y$ directions by $n_i^x,n_i^y$, this is
\begin{equation}
    e^{i\int \alpha}=i^{n_1^xn_2^y-n_2^xn_1^y}=\overline{CS}_{n_1^x,n_2^y}\overline{CS}_{n_1^y,n_2^x}^\dag~.
\end{equation}

\end{itemize}

\begin{theorem}
    In $\mathbb{Z}_2\times\mathbb{Z}_2$ toric code in 3D cubic lattice with periodic boundary conditions in the presence of the SWAP automorphism defect at $z=0$ plane, the following unitary circuit gives transversal logical CS gate:
\begin{equation}
    U_\text{SPT}|n_1^x,n_2^x,n_1^y,n_2^y\rangle=\overline{CS}_{n_1^x,n_2^y}\overline{CS}_{n_1^y,n_2^x}^\dag|n_1^x,n_2^x,n_1^y,n_2^y\rangle=i^{n_1^xn_2^y-n_2^xn_1^y}|n_1^x,n_2^x,n_1^y,n_2^y\rangle~.
\end{equation}
    
\end{theorem}

We remark that the transversal CS gate in the 3+1d untwisted gauge theory is in 3rd Clifford hierarchy, and this is consistent with the Bravyi-K\"onig bound for qubit Pauli stabilizer models \cite{Bravyi:2013dx}. Other non-Clifford gates such as CCZ, CS, T has been constructed using finite depth circuits in e.g. \cite{Kubica:2015mta,Barkeshli:2023bta,Fidkowski:2023dpe}.

\section{Discussion}
\label{sec:discussion}

In this work we show that automorphism symmetries in twisted gauge theories can lead to symmetry extension, higher group symmetry, and/or non-invertible symmetries. We illustrate the discussion with various twisted gauge theories in different spacetime dimensions. In addition, we show that automorphism symmetry gives rise to new non-Clifford transversal logical gates, especially transversal gates in 2+1d of 4th level Clifford hierarchy for $\mathbb{Z}_N$ qudit Clifford stabilizer models.

There are several future directions we would like to pursue. First, we would like to extend the Bravyi-K\"onig bound for qudit Clifford hierarchy stabilizer models, similar to \cite{Kobayashi:2025cfh}. This is important for understanding the natural logical gates in qudit stabilizer models. For example, some algorithm might be more efficiently implemented using qudit Stabilizer models of higher Clifford hierarchy. Second, we would like to explore the dynamics and renormalization group flows of gauge theories coupled to matter using the automorphism symmetries. For example, there can be new anomaly matching conditions for the higher symmetries induced by the automorphism symmetries. In addition, the automorphism symmetries can also constrain the relevant operators, result in potentially novel fixed points.

\section*{Acknowledgment}

We thank Guanyu Zhu for discussions and collaboration on a related work \cite{Kobayashi:2025cfh}. We thank Sakura Sch\"afer-Nameki and Alison Warman for discussion.
P.-S.H. is supported by Department of Mathematics King’s College
London. R.K. is supported by the U.S. Department of Energy through grant number DE-SC0009988 and the Sivian Fund.
R.K. and P.-S.H. thank Isaac Newton Institute for hosting the workshop ``Diving Deeper into Defects: On the Intersection of Field Theory, Quantum Matter, and Mathematics'', during which part of the work is completed.

\bibliographystyle{utphys}
\bibliography{biblio.bib,mybib_merge}
\end{document}